\newtheorem{theorem}{Theorem}
\newtheorem{lemma}{Lemma}
\newcommand{\distortion}[1]{\psi_\gamma\left( #1 \right)}
\newcommand{\cdf}[2]{F_{#1}^\mathbb{Q}\left( #2 \right)}
\newcommand{\mubar}{\bar{\mu}}
\newcommand{\sigmabar}{\bar{\sigma}}
\newcommand{\bid}[1]{\mathrm{bid}\left( #1 \right)}
\newcommand{\ask}[1]{\mathrm{ask}\left( #1 \right)}
\newcommand{\bidGamma}[1]{\mathrm{bid}_\gamma\left( #1 \right)}
\newcommand{\askGamma}[1]{\mathrm{ask}_\gamma\left( #1 \right)}
\newcommand{\call}[1]{\mathcal{C}\left( #1 \right)}
\newcommand{\putt}[1]{\mathcal{P}\left( #1 \right)}
\newcommand{\PV}[1]{\mathrm{PV}\left( #1 \right)}
\newcommand{\bidcall}{b_\mathcal{C}}
\newcommand{\bidput}{b_\mathcal{P}}
\newcommand{\askcall}{a_\mathcal{C}}
\newcommand{\askput}{a_\mathcal{P}}
\newcommand{\sigmaput}{\sigma_{\mathcal{P}}}
\newcommand{\sigmacall}{\sigma_{\mathcal{C}}}
\newcommand{\callBSarg}[1]{\mathcal{C}^{\text{BS}}\left( #1 \right)}
\newcommand{\puttBSarg}[1]{\mathcal{P}^{\text{BS}}\left( #1 \right)}
\newcommand{\callBarg}[1]{\mathcal{C}^{\text{B}}\left( #1 \right)}
\newcommand{\puttBarg}[1]{\mathcal{P}^{\text{B}}\left( #1 \right)}
\newcommand{\callBS}{\mathcal{C}^{\text{BS}}}
\newcommand{\puttBS}{\mathcal{P}^{\text{BS}}}
\newcommand{\callB}{\mathcal{C}^{\text{B}}}
\newcommand{\puttB}{\mathcal{P}^{\text{B}}}
\begin{document}
\date{30 April 2021}
\title{Liquidity-free implied volatilities: an approach using conic finance}

\author[,1,2]{Matteo Michielon\thanks{matteo.michielon@nl.abnamro.com (corresponding author).}}
\author[,2]{Asma Khedher\thanks{a.khedher@uva.nl.}}
\author[,2,3]{Peter Spreij\thanks{p.j.c.spreij@uva.nl.}}
\affil[1]{\small Quantitative Analysis and Quantitative Development, ABN AMRO Bank N.V., Gustav Mahlerlaan 10, 1082 PP Amsterdam, The Netherlands.}
\affil[2]{\small Korteweg-de Vries Institute for Mathematics, University of Amsterdam, Science Park 105-107, 1098 XG Amsterdam, The Netherlands}
\affil[3]{\small Institute for Mathematics, Astrophysics and Particle Physics, Radboud University Nijmegen, Huygens building, Heyendaalseweg 135, 6525 AJ Nijmegen, The Netherlands}

\maketitle


\begin{abstract}
We consider the problem of calculating risk-neutral implied volatilities of European options without relying on option mid prices but solely on bid and ask prices. We provide an approach, based on the conic finance paradigm, that allows to uniquely strip risk-neutral implied volatilities from bid and ask quotes, and that does not require restrictive assumptions. Our methodology also allows to jointly calculate the implied liquidity of the market. The idea outlined in this paper can be applied to calculate other implied parameters from bid and ask security prices as soon as their theoretical risk-neutral counterparts are strictly increasing with respect to the former.
\end{abstract}

{\bf Keywords:} Bid-ask spread; conic finance; distorted expectation; implied volatility; liquidity.

\begin{doublespace}

\section{Introduction}\label{sec:introduction}

The implied volatility of an option is defined as the value of the volatility of the underlying asset which, when used as input in a given pricing model, returns a theoretical value matching the current market price of the option considered. In this article we propose a methodology which allows to compute risk-neutral implied volatilities of European-options.\footnote{Here options are always assumed to have European-style exercise. Thus, the exercise type will be often omitted, for brevity.} This is accomplished without relying on any mid quote approximations. Instead, our approach can be applied starting from bid and ask quotes directly, and we outline how to use our technique under both Black-Scholes and Bachelier modeling settings.

The concept of implied volatility is relevant for different reasons. First of all, Black-Scholes (but also Bachelier) implied volatilities are important \emph{quoting conventions} in financial markets. They are therefore useful as benchmarks for the calibration of option pricing models. Nonetheless, several applications of the notion of implied volatility have been investigated outside the valuation framework, and, more precisely, in forecasting analysis: there are studies investigating the use of implied volatilities to predict, amongst other things, realized volatilities \cite{evidence}, asset returns \cite{an, fu} and financial market bubbles \cite{bubbles}. Moreover, implied volatility spreads, i.e., the differences in call and put implied volatilities, have been used to forecast option returns \cite{doran} and equity premia \cite{cao}. Thus, the more accurately one can calculate implied volatilities from option prices, especially far from the at-the-money point where liquidity is lower, the better. 

In practical applications and analyses, implied volatilities are often calculated starting from mid option prices, as for instance in \cite{midVols}. It is a known fact that the risk-neutral price of a contract lies within an interval with lower and upper bounds given by the bid and the ask prices, respectively. However, the risk-neutral price, in general, does not coincide with the mid, despite the latter is, usually, employed as a proxy for the former. To be able to calculate the correct risk-neutral implied volatility of an option without relying on mid market approximations, one would need to model option prices in a two-price economy. One possible approach to do so is that of conic finance, introduced in \cite{cherny}. This allows to evaluate bid and ask prices of contingent claims by recognizing that, in an economy, risk cannot be fully eliminated. Therefore, markets should quote based on the notions of (static) \emph{index of acceptability} and \emph{coherent risk measure}, consistently with the risk-neutral paradigm. By characterizing the structure of the contingent claims that are considered acceptable by the market, computing bid and ask prices can be performed by means of Choquet expectations \cite{choquet1953} of the relevant terminal payoffs with respect to distorted versions of the risk-neutral distribution of the underlying asset. This static approach to conic finance has found disparate practical applications. These applications range from exotic and structured products \cite{exotic1,structured} to contingent convertibles \cite{conicCoconuts}, from capital calculations \cite{capital1,capital2} to credit valuation adjustments \cite{cva1,cva2}, and again from hedging insurance risk \cite{insurance} to implied liquidity \cite{impliedLiquidity}. For a better overview of the applications of conic finance, see \cite{bookConicFinance}. We observe that the approach to conic finance based on static indices of acceptability can be extended to a time-dependent framework, as in \cite{dynamicConicFinance}, via the notion of \emph{dynamic index of acceptability} \cite{dynamicAccIndex}. However, as our aim is that of extracting market information from the currently-available market data (i.e., European option prices), a static approach to conic finance already suits our needs.

Our approach to imply risk-neutral volatilities without relying on any mid quote approximation is based on the conic finance theory of \cite{cherny} and, in particular, on \cite{michielon}, where a methodology to imply risk-neutral default probability distributions from bid and ask credit default swaps (CDSs) is outlined. Note that our methodology can be used to compute risk-neutral market-implied quantities from quoted bid and ask prices of any type of contingent claim, provided that some basic assumptions are satisfied. In particular, in the specific case of European options, our methodology requires some technical conditions to be fulfilled concerning the liquidity level of the market and the infima and suprema of the option prices with respect to changes in the volatility parameter. We observe that the fact that European option prices are strictly increasing with respect to the volatility parameter is essential for our technique to be applied (therefore, for other products, to imply a given model parameter a monotonicity condition needs to be satisfied; see Theorem \ref{th:general} for the technical details).

Within the conic finance framework the concept of conic implied volatility has been introduced in \cite[Sec. 5.4.3]{bookConicFinance}. Therein it is illustrated how, given market bid and ask quotes, a distortion function, and a liquidity level, one can compute the implied volatilities that allow to price back the observed bid and ask prices, named \emph{conic implied volatilities}. This approach is different from that of calculating \emph{bid} and \emph{ask implied volatilities}, that is, the implied volatilities that allow to match quoted bid and ask option prices under risk-neutral settings. Bid implied volatilities are lower than ask implied volatilities given that bid prices are below their ask counterparts. On the contrary, \cite[Sec. 5.4.3]{bookConicFinance} show that this condition does not need to hold when conic implied volatilities are calculated. Note that the technique proposed in \cite[Sec. 5.4.3]{bookConicFinance} is outlined in the specific case of options with European exercise features under Black-Scholes specifications for underlyings paying continuous dividends. However, it can be also applied, for instance, in the case of Bachelier specifications, and it is not restricted to the equity asset class only. Our idea is fundamentally different from both the above, as in our approach we imply a single volatility given a bid and an ask, and not an implied volatility per quote (i.e., one for the bid and one for the ask) as done in the standard case and in \cite[Sec. 5.4.3]{bookConicFinance}. This because we are interested in computing implied volatilities which can be interpreted as risk-neutral ones. Further, our approach still allows to compute implied volatility spreads, as the methodology can be followed for calls and puts separately. Our method guarantees that implied risk-neutral volatilities and liquidity levels, the latter in the spirit of \cite{impliedLiquidity}, can be uniquely determined. Further, the methodology outlined here is also simple from a computational perspective, as it only requires to solve a (constrained) non-linear system with two equations and two unknowns. 

We highlight here that it is not our intention to advocate the usage of Black-Sholes (or Bachelier) settings in financial modeling. The reason why we provide a method to strip \emph{liquidity-free} implied risk-neutral volatilities is that option prices are often quoted in terms of Black-Scholes (or Bachelier) volatilities. In addition, both Black-Scholes and Bachelier settings can be seen as option price interpolators, and the implied volatilities they generate are often benchmark inputs in several pricing models. Therefore, their accurate calculation, which we show can be performed without relying on mid quote approximations, is of key importance in financial modeling.

\medskip

This paper is organized as follows. Section \ref{sec:bidAsk} provides a brief introduction to the theory of conic finance. Section \ref{sec:liquidityFreeVols} outlines how to compute risk-neutral implied volatilities starting from bid and ask option prices using the conic finance theory. In particular, it highlights how to do so in the case distortions are modeled as Wang transforms \cite{wang} by recalling the conic Black Scholes formulae of \cite[Sec. 5.4]{bookConicFinance} and by providing conic Bachelier option pricing formulae. Section \ref{sec:example} provides an illustration of the methodology outlined in this article, while Section \ref{sec:conclusion} concludes. The proof of Theorem \ref{th:general} can be found in Appendix \ref{sec:proof}. For completeness, in Appendix \ref{sec:derivation} the derivation of risk-neutral Bachelier option pricing formulae is available, while in Appendix \ref{sec:remark} a remark on a property of the Wang transform is provided.

\section{Pricing in a two-price economy}\label{sec:bidAsk}
The theory of conic finance introduced in \cite{cherny2009} is based on the idea that, in financial markets, risks cannot be fully hedged. Therefore, positions are taken after having weighted the possible risks and rewards connected to the instruments traded in the market. Hence, financial markets are modeled as abstract counterparties that allow trades to take place after they have passed some sort of ``quality assessment''. To do so, financial markets would need a machinery to perform this appraisal. In \cite{cherny2009} this is based on the concept of index of acceptability. Given a probability space $(\Omega, \mathcal{F}, \mathbb{P})$, a functional $\alpha: L^\infty(\Omega, \mathcal{F}, \mathbb{P}) \to [0,+\infty]$ which assigns higher (lower) values to random variables that are expected to perform better (worse) is what an index of acceptability is. \cite{cherny2009} impose some technical conditions on the notion of index of acceptability, otherwise their class would be too wide for being of practical use. In particular, if two random cashflows are acceptable at a given level $\gamma$ (i.e., if their level of acceptability is at least $\gamma$), then the same applies to any convex combination of them. Moreover, indices of acceptability are assumed to be monotonic: if a random cashflow always outperforms a second one, then the former would be better ranked than the latter. Indices of acceptability are also supposed to be scale-invariant, i.e., the expected performance of a cashflow $X$ is the same of that of $\lambda X$, for every $\lambda>0$. Finally, the technical Fatou property needs to be satisfied: given $(X_n)_n$ a sequence of random cashflows such that, for every $n$, $|X_n|\leq 1$ and $\alpha(X_n)\geq \gamma$, then if $(X_n)_n$ converges in probability to a random cashflow $X$, then also $\alpha(X)\geq \gamma$. \cite{cherny2009} prove that, provided an index of acceptability $\alpha$, for every $x\geq0$ there exists a set $\mathfrak{Q}_x$ of probability measures absolutely continuous with respect to $\mathbb{P}$ such that $x\leq x'$ implies $\mathfrak{Q}_x\subseteq\mathfrak{Q}_{x'}$, and that
\begin{equation}\nonumber
\alpha(X)=\sup\left\{x\geq0:\inf_{\mathbb{Q}\in\mathfrak{Q}_x}\mathbb{E}^{\mathbb{Q}}(X)\geq0\right\}.
\end{equation}

The concept of index of acceptability can be then linked to that of coherent risk measure, i.e., a map $\rho:L^\infty(\Omega, \mathcal{F}, \mathbb{P}) \to [0,+\infty]$ that is transitive, sub-additive, positively homogeneous and monotonic, see \cite[Sec. 4.1]{bookConicFinance}. \cite{delbaen2009} show that a coherent risk measure can be identified with a functional such that $X\mapsto\sup_{\mathbb{Q}\in\mathfrak{Q}}\mathbb{E}^{\mathbb{Q}}(X)$, where the set $\mathfrak{Q}$ contains measures that are absolutely continuous with respect to $\mathbb{P}$. The concepts of index of acceptability and that of coherent risk measure can be tied together via the relationship
\begin{equation}\label{eq:ia}
\alpha(X)=\sup\left\{ x\geq0:\rho_x(-X)\leq0 \right\},
\end{equation} 
where $(\rho_x)_{x\geq0}$ is a family of coherent risk measures such that $\rho_x(-X)\leq\rho_{x'}(-X)$ whenever $x\leq x'$. From this it follows that $\alpha(X)\geq\gamma$ is equivalent to $\rho_\gamma(-X)\leq0$. For this reason, indices of acceptability with the aforementioned properties are often called \emph{coherent indices of acceptability}.

We now recall the definition of (asymmetric) Choquet integral \cite{choquet1953} (we will, from here onwards, always omit the word ``asymmetric'', as symmetric Choquet integrals are not relevant in this framework, and we refer the interested reader to \cite[Sec. 7]{denneberg}). For a \emph{non-additive probability} $\mu$ and a random variable $X$, the Choquet integral is defined as
\begin{equation}\label{eq:choquetGeneral}
	(\text{C}) \int_\Omega X \, d\mu \coloneqq \int_{-\infty}^0 \mu(X\geq t)-1\,dt + \int_0^{+\infty} \mu(X\geq t)\,dt.
\end{equation}
In \eqref{eq:choquetGeneral}, the integrals on the right-hand side should be interpreted as improper Riemann integrals. Therefore, they both exist given that their arguments are monotonic functions, which guarantees that the sets of their discontinuities have a Lebesgue measure of zero. Note, however, that their sum does not necessarily exist\footnote{If $X$ is non-negative(positive), then \eqref{eq:choquetGeneral} is guaranteed to be well-defined.}: see \cite[Sec. 5]{denneberg} for a detailed treatment of Choquet integrals.

Let a risk-neutral measure $\mathbb{Q}\in\bigcap_{x\geq0}\mathfrak{Q}_x$. A \emph{distortion function} is a function from $[0,1]$ to $[0,1]$ that maps 0 to 0 and 1 to 1. For a concave distortion function $\psi(\,\cdot\,)$ we denote with $\psi(\mathbb{Q})(A)$ the (potentially non-additive) probability measure that assigns to each measurable set $A$ the probability mass $\psi(\mathbb{Q}(A))$. Given $\left(\psi_x\right)_{x\geq 0}$ an increasing family of concave distortion functions, the map $\rho_x$ such that $X\mapsto (\text{C}) \int_\Omega X \, d\psi_x({\mathbb{Q}})$ defines a coherent risk measure. Hence, as per \cite{cherny2009}, functionals of this form can be employed to describe indices of acceptability by setting
\begin{equation}\label{eq:oia}
\alpha(X)\coloneqq\sup\left\{x\geq0:	(\text{C})\int_\Omega -X \, d\psi_x({\mathbb{Q}})\leq0\right\}.
\end{equation}
The tools just introduced can be now used to characterize direction-dependent pricing in financial markets.

Assume that a threshold of at least $\gamma$ has been set by the market for a given contingent claim to be considered acceptable and, thus, tradable. We assume a constant risk-free rate $r$\footnote{Note that the constant risk-free rate assumption has been made only for consistency with the fact that, in this article, we consider Black-Scholes and Bachelier models. However, in the case of a time-dependent risk-free rate, all the steps outlined from here onwards would still hold.} and consider a contingent claim $X$ with a terminal payoff at time $T$. The market is then willing to buy $X$ at a price $b$ if and only if $\alpha(X-e^{-rT} b)\geq\gamma$. This is equivalent, given the assumption that the market evaluates the performance of contingent claims by means of Choquet integrals, to  the condition $
b\leq-e^{-rT}(\mathrm{C})\int_\Omega -X\,d\psi_\gamma(\mathbb{Q})$. Thus, the bid price of $X$, $\mathrm{bid}_\gamma(X)$, equals
\begin{equation}\label{eq:bid}
\mathrm{bid}_\gamma(X)=-e^{-rT}(\mathrm{C})\int_\Omega -X\,d\psi_\gamma(\mathbb{Q}).
\end{equation}
As the ask price of $X$, $\mathrm{ask}_\gamma(X)$, equals $-\mathrm{bid}_\gamma(-X)$, from \eqref{eq:bid} it then immediately follows that
\begin{equation}\label{eq:ask}
\askGamma{X}=e^{-rT}(\mathrm{C})\int_\Omega X\,d\psi_\gamma(\mathbb{Q}).
\end{equation}
Note that should more than one risk-neutral measure exist, then one would need to choose which risk-neutral measure to use within formulae \eqref{eq:bid} and \eqref{eq:ask}. Further, we observe that, in the formulae provided above, the choice of the distortion function provides the modeler with a degree of freedom to describe the liquidity dynamics of the market. In addition, to different values of the distortion parameter $\gamma$ there correspond different market liquidity specifications. This framework reminds that of modeling preferences towards risk by means of utility functions. Despite utility theory characterizes agents' behavior from a micro-economic perspective (i.e., the individual preferences of each agent) while conic finance describes risk attitudes of financial markets, there are some similarities between the two approaches worth of attention. In particular, in utility theory the modeler has to choose the functional form of the utility function to be used. This is similar to the conic finance case, where a choice related to the distortion function also has to be made. Further, in utility theory one has to choose the parameter(s) of the utility function in order to describe the level of risk aversion (or risk tolerance) of an agent. Similarly, on the conic finance side, the behavior of the market is further described by the distortion parameter $\gamma$. In addition, we also point out that Choquet integrals are common tools in decision theory. In particular, we recall the results of \cite{schmeidler89} which characterize choices under uncertainty, the latter in the sense of  \cite{knight}, in terms of Choquet integrals (for a representation result concerning Choquet integrals, see \cite{schmeidler86}, on which \cite{schmeidler89} is based on). We further highlight that Choquet integrals can be also applied to option pricing problems under uncertainty by means of Choquet Brownian motions, introduced in \cite{choquetBrownian}, as done in \cite{choquetOptions}.

\section{Liquidity-free option implied volatilities}\label{sec:liquidityFreeVols}

From here onwards we consider European options only, and we assume to be either within the Black-Scholes or the Bachelier framework. This because we are interested in backing out either log-normal or normal implied volatilities. 

Given a filtered probability space $(\Omega, \mathcal{F}, (\mathcal{F}_t)_{t\in[0,T]}, \mathbb{P})$, we denote with $(X_t)_{t\in[0,T]}$ the process representing a ``generic'' underlying. In particular, by introducing an adjusted risk-neutral drift $r-\alpha$, one can define Black-Scholes dynamics via the stochastic differential equation (SDE) given, under the risk-neutral measure $\mathbb{Q}$ equivalent to $\mathbb{P}$ on $\mathcal{F}$, by
\begin{equation}\label{eq:bs}
dX = (r-\alpha) X dt + \sigma X dW.
\end{equation}
In \eqref{eq:bs}, $\sigma$ denotes the volatility term, and $(W_t)_{t\in[0,T]}$ a Brownian motion adapted to the filtration $(\mathcal{F}_t)_{t\in[0,T]}$. The parameter $\alpha$ can be defined according to the asset class considered. For instance, setting $\alpha=0$ outlines the standard Black-Scholes framework on a non-dividend paying underlying, setting $\alpha=q$ with $q$ denoting the continuous dividend yield corresponds to the Black-Scholes framework for an underlying paying continuous dividends, while setting $\alpha=r$ corresponds to the Black model for futures options; see \cite[Sec. 1.1.6]{optionFormulae} for further possible specifications.

To take into accunt the possibility that the underlying asset can reach negative values, e.g., in the case of rates and oil prices\footnote{See \url{https://www.cmegroup.com/content/dam/cmegroup/notices/clearing/2020/04/Chadv20-152.pdf} for a note of the Chicago Mercantile Exchange concerning the possible use of the Bachelier formula due to the negative oil prices observed in 2020.}, option prices can also be quoted in terms of Bachelier (i.e., normal) implied volatilities. Therefore, in a similar manner as per \eqref{eq:bs} and using the same notation conventions, one can define the Bachelier SDE as
\begin{equation}\label{eq:b}
dX = (r-\alpha) X dt + \sigma dW.
\end{equation}
Via \eqref{eq:b} we have chosen to describe a generalized variant, in the sense of \cite[Sec. 1.1.6]{optionFormulae}, of the ``contemporary'' version of the Bachelier model as in \cite[Sec. 3.3]{musiela}. Note, however, that in the literature sometimes the SDE corresponding to the Bachelier model slightly differs from that outlined in \eqref{eq:b}, for instance by not considering the drift term (see \cite[Sec. 1.3.1]{optionFormulae}). In any case, independently on the exact specifications of the Bachelier SDE considered, all the Bachelier-related calculations available in this article can be performed in the same manner, up to minor rearrangements.

For a given strike $K$ and maturity $T$, we denote with $\mathcal{C}$ and with $\mathcal{P}$ the prices of a call and a put option written on $X$ with such strike and maturity. However, when considering the Black-Scholes model \eqref{eq:bs} (Bachelier model \eqref{eq:b}), we will use $\callBS$ ($\callB$) and $\puttBS$ ($\puttB$), instead. Note that, depending, on the context, we will make option prices explicitly depend on specific parameters only, as it will be clear in the next sections. This to keep the notation as light as possible. In any case, the dependency on both strike and maturity will be always omitted, as redundant in our context.

%

In practice, implied volatilities, either normal or log-normal, are backed out from call and put options separately. More precisely, implied volatilities for call options are computed starting from the mid prices of these options and, similarly, the same applies to put implied volatilites. However, in principle, one would like to compute the real risk-neutral implied volatilities, without relying on approximating risk-neutral prices by their mid counterparts. A similar problem to this has been analyzed in \cite{michielon}. Therein it is shown that, in the case of CDSs, under mild assumption concerning the liquidity level of the market and the characteristics of the default time process, it is possible to strip risk-neutral default probabilities from bid and ask CDS quotes directly in a unique manner. The considerations available in \cite{michielon} are now extended to a more general setup. In particular, the methodology we highlight here is quite general. That  is, it can be applied to any contingent claim whose risk-neutral price depends on a single unknown parameter provided that the former is strictly increasing with respect to the latter, and as soon as two basic additional conditions are satisfied. That is, the range of theoretical prices obtainable by changing the free parameter, as well as the liquidity level of the market, should be ``wide enough'', as we will explain more technically in Theorem~\ref{th:general}.

Let $Y$ be a contingent claim, and denote with $\PV{Y(\lambda)}$ its risk-neutral price, assumed dependent on an unknown parameter $\lambda$. Further, let $b$ and $a$ denote its quoted bid and ask prices, respectively. The main result available in \cite{michielon} is recalled in Theorem \ref{th:general} in a more general fashion. The proof of Theorem \ref{th:general} follows from Lemmas 1, 2, 3 and Theorem 1 in \cite{michielon}, as the steps outlined therein can be followed in the same manner. For completeness, we have provided the aforementioned results, adapted to the more general context considered in the present article, in Appendix \ref{sec:proof}.

\begin{theorem}\label{th:general}
Let $Y$ be a contingent claim whose price depends on a parameter $\lambda>0$ such that the risk-neutral price of $Y$ is strictly increasing with respect to $\lambda$. Assume that $\inf_{\lambda>0}\PV{Y(\lambda)}<b$ and that $\sup_{\lambda>0}\PV{Y(\lambda)}>a$. This uniquely identifies an interval $[\lambda_a, \lambda_b]$ such that $\lambda\in[\lambda_b, \lambda_a]$ if and only if $\inf_{\lambda>0}\PV{Y(\lambda)}<b$ and $\sup_{\lambda>0}\PV{Y(\lambda)}>a$. Moreover, assume that for every $\lambda\in[\lambda_b, \lambda_a]$ there exists $\gamma>0$ such that $\ask{Y(\lambda),\gamma}-\bid{Y(\lambda),\gamma}=a-b$. Then, the constrained non-linear system 
\begin{equation}\label{eq:system}
\begin{cases}
\bid{Y(\lambda),\gamma} =b \\
\ask{Y(\lambda),\gamma} =a            
\end{cases}
\end{equation}
with 
\begin{equation}\nonumber
b<\PV{Y(\lambda)}<a
\end{equation}
admits a solution, which is also unique.
\end{theorem}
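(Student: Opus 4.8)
The plan is to first pin down the bracketing parameters, then establish existence through a one–dimensional intermediate value argument along a spread–matching curve, and finally deduce uniqueness purely from monotonicity. Throughout I would rely on the structural properties of the distorted prices that play the role of Lemmas 1--3 of \cite{michielon}: for fixed $\gamma$ both $\lambda\mapsto\bid{Y(\lambda),\gamma}$ and $\lambda\mapsto\ask{Y(\lambda),\gamma}$ are continuous and strictly increasing, inheriting this from $\PV{Y(\cdot)}$ because the Choquet integrals in \eqref{eq:bid}--\eqref{eq:ask} are monotone; while for fixed $\lambda$ the map $\gamma\mapsto\bid{Y(\lambda),\gamma}$ is continuous and strictly decreasing and $\gamma\mapsto\ask{Y(\lambda),\gamma}$ continuous and strictly increasing, with $\bid{Y(\lambda),0}=\ask{Y(\lambda),0}=\PV{Y(\lambda)}$. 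In particular the spread $S(\lambda,\gamma)\coloneqq\ask{Y(\lambda),\gamma}-\bid{Y(\lambda),\gamma}$ vanishes at $\gamma=0$ and is strictly increasing in $\gamma$.

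Since $\PV{Y(\cdot)}$ is continuous and strictly increasing with $\inf_{\lambda>0}\PV{Y(\lambda)}<b<a<\sup_{\lambda>0}\PV{Y(\lambda)}$, there exist unique $\lambda_b<\lambda_a$ with $\PV{Y(\lambda_b)}=b$ and $\PV{Y(\lambda_a)}=a$, and $\lambda\in[\lambda_b,\lambda_a]$ holds precisely when $b\le\PV{Y(\lambda)}\le a$; this is the asserted interval. For each $\lambda\in[\lambda_b,\lambda_a]$ the hypothesis supplies a $\gamma>0$ with $S(\lambda,\gamma)=a-b$, and strict monotonicity of $S$ in $\gamma$ makes it unique, defining a map $\lambda\mapsto\gamma(\lambda)$ whose continuity follows from that of $S$ by a monotone–inverse argument. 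For existence I would then examine $g(\lambda)\coloneqq\bid{Y(\lambda),\gamma(\lambda)}$, which is continuous on $[\lambda_b,\lambda_a]$. At the left endpoint $\bid{Y(\lambda_b),\gamma(\lambda_b)}<\bid{Y(\lambda_b),0}=\PV{Y(\lambda_b)}=b$ because $\gamma(\lambda_b)>0$ and the bid strictly decreases in $\gamma$, so $g(\lambda_b)<b$; at the right endpoint $\ask{Y(\lambda_a),\gamma(\lambda_a)}>\PV{Y(\lambda_a)}=a$, whence $g(\lambda_a)=\ask{Y(\lambda_a),\gamma(\lambda_a)}-(a-b)>b$. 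The intermediate value theorem yields $\lambda^\ast\in(\lambda_b,\lambda_a)$ with $\bid{Y(\lambda^\ast),\gamma(\lambda^\ast)}=b$; since $S(\lambda^\ast,\gamma(\lambda^\ast))=a-b$ the ask equation holds automatically, and $\lambda^\ast\in(\lambda_b,\lambda_a)$ forces $b<\PV{Y(\lambda^\ast)}<a$, so the constraint is satisfied.

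Uniqueness would not require the curve $\gamma(\,\cdot\,)$ and follows directly from monotonicity. Suppose $(\lambda_1,\gamma_1)$ and $(\lambda_2,\gamma_2)$ both solve \eqref{eq:system}, say with $\lambda_1\le\lambda_2$. If $\lambda_1=\lambda_2$, the bid equation together with strict monotonicity of the bid in $\gamma$ forces $\gamma_1=\gamma_2$. If $\lambda_1<\lambda_2$, the bid equation gives $\bid{Y(\lambda_2),\gamma_1}>\bid{Y(\lambda_1),\gamma_1}=b=\bid{Y(\lambda_2),\gamma_2}$, so $\gamma_1<\gamma_2$ since the bid strictly decreases in $\gamma$; but then $\ask{Y(\lambda_2),\gamma_2}>\ask{Y(\lambda_1),\gamma_1}=a$ because the ask strictly increases in both arguments, contradicting the ask equation. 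Hence the solution is unique.

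The main obstacle I anticipate is not these order arguments but the supporting analytic facts taken for granted above: the joint continuity of the distorted bid and ask, and above all that strict monotonicity of $\PV{Y(\cdot)}$ in $\lambda$ is transmitted to the distorted prices. This transmission is what legitimizes both the intermediate value step and the contradiction, and it is where the Choquet representations \eqref{eq:bid}--\eqref{eq:ask} and the concavity of $\psi_\gamma$ enter; it is precisely the content carried over from Lemmas 1--3 of \cite{michielon}, here adapted to a general price functional that is monotone in a single parameter.
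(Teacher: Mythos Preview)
Your proposal is correct and follows essentially the same route as the paper: establish the bracket $[\lambda_b,\lambda_a]$, use strict monotonicity of the spread in $\gamma$ to define the spread-matching map $\lambda\mapsto\gamma(\lambda)$, prove its continuity, and then run an intermediate value argument on $\lambda\mapsto\bid{Y(\lambda),\gamma(\lambda)}$ between the endpoints. The only noteworthy difference is in the uniqueness step: the paper simply cites its Lemma~\ref{lemma:lambda interval} (uniqueness of $\gamma$ for fixed $\lambda$), which by itself does not quite close the argument, whereas your direct two-variable monotonicity contradiction actually settles uniqueness cleanly and is a small improvement over the paper's presentation.
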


We observe that both call and put option prices are strictly increasing with respect to the volatility of the underlying.\footnote{Note that the conditions $\inf_{\lambda>0}\PV{Y(\lambda)}<b$ and that $\sup_{\lambda>0}\PV{Y(\lambda)}>a$ can be made more explicit in the case of European options. In particular, for a call option it results that $\inf_{\sigma>0}\call{\sigma}= e^{-rT}(e^{(r-\alpha) T}X_0-K)^+$, while $\sup_{\sigma>0}\call{\sigma}=e^{-\alpha T}X_0$. Similarly, for a put option it results that $\inf_{\sigma>0}\putt{\sigma}= e^{-rT}(K-e^{(r-\alpha) T}X_0)^+$, and that $\sup_{\sigma>0}\putt{\sigma}=e^{-rT}K$.} Denote with $b_\mathcal{C}$ ($b_\mathcal{P}$) and with $a_\mathcal{C}$ ($a_\mathcal{P}$) the quoted bid and ask price of the call (put), respectively. Ideally, one would aim to find an implied risk-neutral volatility $\sigma$ and two distortion parameters $\gamma_\mathcal{C}$ and $\gamma_\mathcal{P}$ such that the equalities
\begin{equation}\label{eq:system4equations}
\begin{cases}
\bid{\call{\sigma},\gamma_\mathcal{C}} =\bidcall \\
\ask{\call{\sigma},\gamma_\mathcal{C}} =\askcall  \\
\bid{\putt{\sigma},\gamma_\mathcal{P}} =\bidput \\
\ask{\putt{\sigma},\gamma_\mathcal{P}} =\askput     
\end{cases}
\end{equation}
with the constraints
\begin{equation}\nonumber
\begin{cases}
\bidcall<\PV{\call{\sigma}}<\askcall\\
\bidput<\PV{\putt{\sigma}}<\askput
\end{cases}
\end{equation}
are satisfied. Notwithstanding, it is in general not possible to solve \eqref{eq:system4equations} due to the obvious lack of degrees of freedom. Therefore, separate volatility and liquidity parameters should be used for calls and puts, as done in practice to compute call-put volatility spreads. In particular, one can solve
\begin{equation}\nonumber
\begin{cases}
\bid{\call{\sigma},\gamma} =b_\mathcal{C} \\
\ask{\call{\sigma},\gamma} =a_\mathcal{C}  \\
\end{cases}
\end{equation}
with 
\begin{equation}\nonumber
b_\mathcal{C}<\PV{\mathcal{C}(\sigma)}<a_\mathcal{C},
\end{equation}
and obtain that a unique solution, in virtue of Theorem \ref{th:general}, exists, denoted with $(\sigma_\mathcal{C}, \gamma_\mathcal{C})$. Similarly, one can solve
\begin{equation}\nonumber
\begin{cases}
\bid{\putt{\sigma},\gamma} =b_\mathcal{P} \\
\ask{\putt{\sigma},\gamma} =a_\mathcal{P}     
\end{cases}
\end{equation}
with 
\begin{equation}\nonumber
b_\mathcal{P}<\PV{\mathcal{P}(\sigma)}<a_\mathcal{P}
\end{equation}
separately, and again obtain a unique solution, due to Theorem \ref{th:general}, denoted as  $(\sigma_\mathcal{P}, \gamma_\mathcal{P})$.  We call $\sigmacall$ and $\sigmaput$ the liquidity-free call and put implied volatilities, respectively. The quantities $\gamma_\mathcal{C}$ and $\gamma_\mathcal{P}$ denote the implied liquidity levels of the market for calls and puts, respectively.

The approach proposed here is different from that proposed in \cite[Sec. 5.4.3]{bookConicFinance}. Therein, the notion of conic (Black-Scholes) implied volatility is introduced. In particular, for a fixed (and known) distortion parameter, one can then imply a volatility for the bid and one for the ask. However, our approach allows to simultaneously imply both the distortion parameter and the implied volatility directly, without therefore relying on an initial estimation procedure for the distortion itself. This because our goal is that of computing implied volatilities that can be interpreted as risk-neutral ones.

In time series analysis call-put volatility spreads, as outlined in Section \ref{sec:introduction}, can be used as regression variables for forecasting analysis. By taking into account the approach outlined here one would not only have the possibility to introduce liquidity-free call-put volatility spreads in the regression model considered, but also to take into account the implied distortion (i.e., liquidity) parameters as regression variables as well. Potentially, this could enhance the explanatory power of the regression models used for prediction purposes (see \cite{cva2} for an illustration of the explanatory power of the distortion parameter as far as liquidity is concerned).

\subsection{Implied volatilities with the Wang transform}\label{sec:formulae}
The choice of the distortion function to be used in the bid-ask calibration problem is arbitrary, provided that it is concave. Consequently, different possibilities are available, see \cite{cherny2009} and \cite[Sec. 4.7]{bookConicFinance}. However, for distributions of normal or log-normal random variables, which are often employed in financial applications, the Wang transform \cite{wang}, which is defined as 
\begin{equation}\label{eq:wang}
\psi_\gamma(x)\coloneqq \Phi(\Phi^{-1}(x)+\gamma)
\end{equation}
with $\Phi(\,\cdot\,)$ denoting the cumulative distribution function of a standard normal random variable, is a convenient choice. This because \eqref{eq:wang} still allows to obtain closed-form solutions for call and put option prices, see \cite[Sec. 5.4]{bookConicFinance}.\footnote{See Appendix \ref{sec:remark} for a remark concerning how the Wang transform can be a useful tool as soon as the distribution of a normal random variable is transformed via a non-decreasing and left-continuous function.} Therefore, under both Black-Scholes \eqref{eq:bs} and Bachelier \eqref{eq:b} settings, exact formulae can be used to calculate bid and ask option prices via the Wang transform.\footnote{Some other cases where the Wang transform produces analytical option prices formulae are those of the Sprenkle, Boness and Samuelson models (see \cite[Sec. 1.31, 1.32 and 1.33]{optionFormulae}). However, note that computing the Wang transform is computationally expensive, as this requires the evaluation of both the cumulative distribution and quantile functions of a standard normal random variable. Therefore, for large datasets and when the Wang transform does not guarantee analytical formulae to exists, then other choices for the distortion function might be more convenient (see \cite[Sec. 4.7]{bookConicFinance} for an overview).} Thus, our procedure to back out implied volatilities (and, consequentially, implied distortion parameters) can be easily implemented, with the advantage that it does not require to compute the integrals \eqref{eq:bid} and \eqref{eq:ask} numerically should the Wang transform be used.

In the case of the Black-Scholes framework, one obtains that the risk-neutral price of a call option is given by 
\begin{equation}\nonumber
\callBSarg{\alpha}=e^{-\alpha T}X_0\Phi(d_+)-e^{-rT}K\Phi(d_-),
\end{equation}
where
\begin{equation}\nonumber
d_+\coloneqq\frac{\ln\left(\frac{X_0}{K}\right)+(r-\alpha+\frac{1}{2}\sigma^2) T}{\sigma\sqrt{T}},
\end{equation}
and with
\begin{equation}\nonumber
d_-\coloneqq d_+-\sigma\sqrt{T}=\frac{\ln\left(\frac{X_0}{K}\right)+(r-\alpha-\frac{1}{2}\sigma^2) T}{\sigma\sqrt{T}}.
\end{equation}
Further,
\begin{equation}\nonumber
\puttBSarg{\alpha}=e^{-rT}K\Phi(-d_-) -e^{-\alpha T}X_0\Phi(-d_+).
\end{equation}
\cite[Sec. 5.4.2]{bookConicFinance} obtain that, by considering the Wang transform under Black-Scholes settings, bid and ask prices for European calls and puts can be computed as $\bidGamma{\callBSarg{\alpha}}=\callBSarg{\alpha + \frac{\gamma\sigma}{\sqrt{T}}}$, $\askGamma{\callBSarg{\alpha}}=\callBSarg{\alpha - \frac{\gamma\sigma}{\sqrt{T}}}$, $\bidGamma{\puttBSarg{\alpha}}=\puttBSarg{\alpha - \frac{\gamma\sigma}{\sqrt{T}}}$ and, finally, $\askGamma{\puttBSarg{\alpha}}=\puttBSarg{\alpha + \frac{\gamma\sigma}{\sqrt{T}}}$.

We now provide similar relationships in the case the Bachelier model \eqref{eq:b} is considered.\footnote{Risk-neutral call and put option pricing formulae for the Bachelier model are available in Appendix \ref{sec:derivation}, for completeness.} Let $\cdf{X_T}{\,\cdot\,}$ denote the time-$T$ risk-neutral distribution of the underlying asset. First of all we recall that for European vanilla options, if the underlying can reach negative values, in line with \cite[Sec. 5.5]{bookConicFinance} the following formulae can be used to calculate bid and ask European option prices:
\begin{equation}\label{eq:bidCall}
\bidGamma{\mathcal{C}}=e^{-rT}\int_{K}^{\infty} (x-K)\,d\distortion{\cdf{X_T}{x}},
\end{equation}
\begin{equation}\label{eq:askCall}
\askGamma{\mathcal{C}}=e^{-rT}\int_{K}^{\infty} (K-x)\,d\distortion{1-\cdf{X_T}{x}},
\end{equation}
\begin{equation}\label{eq:bidPut}
\bidGamma{\mathcal{P}}=e^{-rT}\int_{-\infty}^{K} (x-K)\,d\distortion{1-\cdf{X_T}{x}},
\end{equation}
and
\begin{equation}\label{eq:askPut}
\askGamma{\mathcal{P}}=e^{-rT}\int_{-\infty}^{K} (K-x)\,d\distortion{\cdf{X_T}{x}}.
\end{equation}
Observe that under both the Black-Scholes and Bachelier specifications \eqref{eq:bs} and \eqref{eq:b} continuous probability density functions for the terminal risk-neutral distribution of the underlying asset are available. Therefore, the relationships \eqref{eq:bidCall}, \eqref{eq:askCall}, \eqref{eq:bidPut} and \eqref{eq:askPut} can be interpreted as both Riemann-Stieltjes and Lebesgue-Stieltjes integrals.

Under the Bachelier dynamics \eqref{eq:b} the risk-neutral distribution of the underlying, at time $T$, is normal with mean $\bar{\mu}$ and variance $\bar{\sigma}^2$ as per \eqref{eq:mean} and \eqref{eq:variance} in Appendix \ref{sec:derivation}. If we consider a Wang transformation with distortion parameter $\gamma$ we obtain that, at time $T$, the underlying $X_T$ is still normally distributed with the same variance $\bar{\sigma}^2$, but this time with mean given by $\bar{\mu}_-\coloneqq\bar{\mu}-\gamma\bar{\sigma}$, see \cite{wang}.
Therefore, we can apply relationship \eqref{eq:bidCall} and obtain that
\begin{align}
\bidGamma{\callB}&=e^{-rT}\int_{K}^{\infty} (x-K)\,d\distortion{\cdf{X}{x}}\nonumber\\
&=e^{-rT}\int_{K}^{\infty} \frac{x-K}{\sigmabar\sqrt{2\pi}}e^{-\frac{1}{2}\left( \frac{x-\bar{\mu}_-}{\bar{\sigma}} \right)^2}\,dx\nonumber\\
&=e^{-rT}\int_{\frac{K-\bar{\mu}_-}{\sigmabar}}^{\infty} \frac{\bar{\mu}_-+\sigmabar x -K}{\sqrt{2\pi}}e^{-\frac{x^2}{2}}\,dx\nonumber\\
&=e^{-rT}\left[(\bar{\mu}_--K)\int_{\frac{K-\bar{\mu}_-}{\sigmabar}}^{\infty} \frac{1}{\sqrt{2\pi}}e^{-\frac{x^2}{2}}\,dx-\sigmabar\int_{\frac{K-\bar{\mu}_-}{\sigmabar}}^{\infty} \frac{-x}{\sqrt{2\pi}}e^{-\frac{x^2}{2}}\,dx\right]\nonumber\\
&=e^{-rT}\left[(\bar{\mu}_--K)\Phi\left(\frac{\bar{\mu}_--K}{\sigmabar}\right)+\sigmabar\phi\left(\frac{\bar{\mu}_--K}{\sigmabar}\right)\right].\nonumber
\end{align}
We can now calculate the call ask price via \eqref{eq:askCall}. First we observe, see \cite{wang}, that
\begin{equation}\label{eq:helper}
\distortion{1-\cdf{X_T}{x}}=\distortion{1-\Phi\left(\frac{x-\mubar}{\sigmabar}\right)}=\distortion{\Phi\left(\frac{\mubar-x}{\sigmabar}\right)}=\Phi\left(\frac{\mubar-x+\gamma\sigmabar}{\sigmabar}\right).
\end{equation}
By setting $\bar{\mu}_+\coloneqq\bar{\mu}+\gamma\bar{\sigma}$ we obtain that
\begin{align}
\askGamma{\callB}&=e^{-rT}\int_{K}^{\infty} (K-x)\,d\distortion{1-\cdf{X_T}{x}}\nonumber\\
&=e^{-rT}\int_{K}^{\infty}\frac{x-K}{\sigmabar\sqrt{2\pi}}e^{-\frac{1}{2}\left(\frac{x-\mubar_+}{\bar{\sigma}}\right)^2}\,dx\nonumber\\
&=e^{-rT}\left[(\mubar_+-K)\Phi\left(\frac{\mubar_+-K}{\sigmabar}\right)+\sigmabar\phi\left(\frac{\mubar_+-K}{\sigmabar}\right)\right].\nonumber
\end{align}

We now calculate the ask price of an European put option via \eqref{eq:askPut}. It results that
\begin{align}
\askGamma{\puttB}&=e^{-rT}\int_{-\infty}^{K} (K-x)\,d\distortion{\cdf{X_T}{x}}\nonumber\\
&=e^{-rT}\int_{-\infty}^K \frac{K-x}{\sigmabar\sqrt{2\pi}}e^{-\frac{1}{2}\left(\frac{x-\mubar_-}{\sigmabar}\right)^2}\,dx\nonumber\\
&=e^{-rT}\int_{-\infty}^\frac{K-\mubar_-}{\sigmabar} \frac{K-\mubar_--\sigmabar x}{\sigmabar\sqrt{2\pi}}e^{-\frac{1}{2}x^2}\,dx\nonumber\\
&=e^{-rT}\left[(K-\mubar_-)\int_{-\infty}^\frac{K-\mubar_-}{\sigmabar}\frac{1}{\sigmabar\sqrt{2\pi}}e^{-\frac{1}{2}x^2}\,dx+\sigmabar\int_{-\infty}^\frac{K-\mubar_-}{\sigmabar}\frac{-x}{\sigmabar\sqrt{2\pi}}e^{-\frac{1}{2}x^2}\,dx\right]\nonumber\\
&=e^{-rT}\left[(K-\mubar_-)\Phi\left(\frac{K-\mubar_-}{\sigmabar}\right)+\sigmabar\phi\left(\frac{K-\mubar_-}{\sigmabar}\right)\right].\nonumber
\end{align}

Recalling \eqref{eq:helper}, the bid price of the put can be calculated using \eqref{eq:bidPut}, from which it follows that
\begin{align}
\bidGamma{\puttB}&=e^{-rT}\int_{-\infty}^{K} (x-K)\,d\distortion{1-\cdf{X_T}{x}}\nonumber\\
&=e^{-rT}\int_{-\infty}^{K}\frac{K-x}{\sigmabar\sqrt{2\pi}}e^{-\frac{1}{2}\left(\frac{x-\mubar_+}{\sigma}\right)^2}\,dx\nonumber\\
&=e^{-rT}\left[(K-\mubar_+)\Phi\left(\frac{K-\mubar_+}{\sigmabar}\right)+\sigmabar\phi\left(\frac{K-\mubar_+}{\sigmabar}\right)\right].\nonumber
\end{align}

To summarize, see notation in Appendix \ref{sec:derivation}, one obtains that $\bidGamma{\callB}=\callBarg{\mubar_-}$, $\askGamma{\callB}=\callBarg{\mubar_+}$, $\bidGamma{\puttB}=\puttBarg{\mubar_+}$, while $\askGamma{\puttB}=\puttBarg{\mubar_-}$.

\section{Example}\label{sec:example}
Here we show how liquidity-free implied volatilities can be extracted from bid and ask prices. In particular, we consider European options on four different underlyings, i.e., European call options on the S\&P 500 index, European put options on the FTSE MIB index, European call options on UBS shares, and European put options on Deutsche Telekom shares. For each of the cases considered we compute, for a given maturity (not kept unchanged for all the underlyings), bid and ask prices, risk-neutral and mid prices, absolute liquidity spreads, relative liquidity spreads, implied risk-neutral and mid volatilities, as well as implied distortion parameters. All the aforementioned calculations have been performed for all the quoted options available for which both bid and ask prices could be retrieved.\footnote{In this section plots have been constructed with respect to moneyness, defined here as the ratio between a given strike price and the value of the underlying.} The Wang transform has been chosen as distortion in all the cases analyzed.

We start by considering European call options on the S\&P 500 index, for which a wide range of strikes is available. These options are very liquid, as illustrated by Figures \ref{fig:SPX_bid_ask_call} and \ref{fig:SPX_absolute_spread_call} (note that the relative bid-ask spreads for deep out-of-the-money options in Figure \ref{fig:SPX_relative_spread_call} are large due to those options having small market value). This results in risk-neutral and mid prices that are very close to each other, as shown in Figure \ref{fig:SPX_price_call}. Also the risk-neutral and mid implied volatility smiles, see Figure \ref{fig:SPX_implied_volatility_call}, are basically overlapping, as expected. The implied distortion parameters, illustrated in Figure \ref{fig:SPX_implied_distortion_call}, closely follow the trend of the relative bid-ask spreads of Figure \ref{fig:SPX_relative_spread_call}.

\begin{figure}[H]
\begin{center}
   \begin{subfigure}{.5\linewidth}
     \centering
     \includegraphics[scale=0.5]{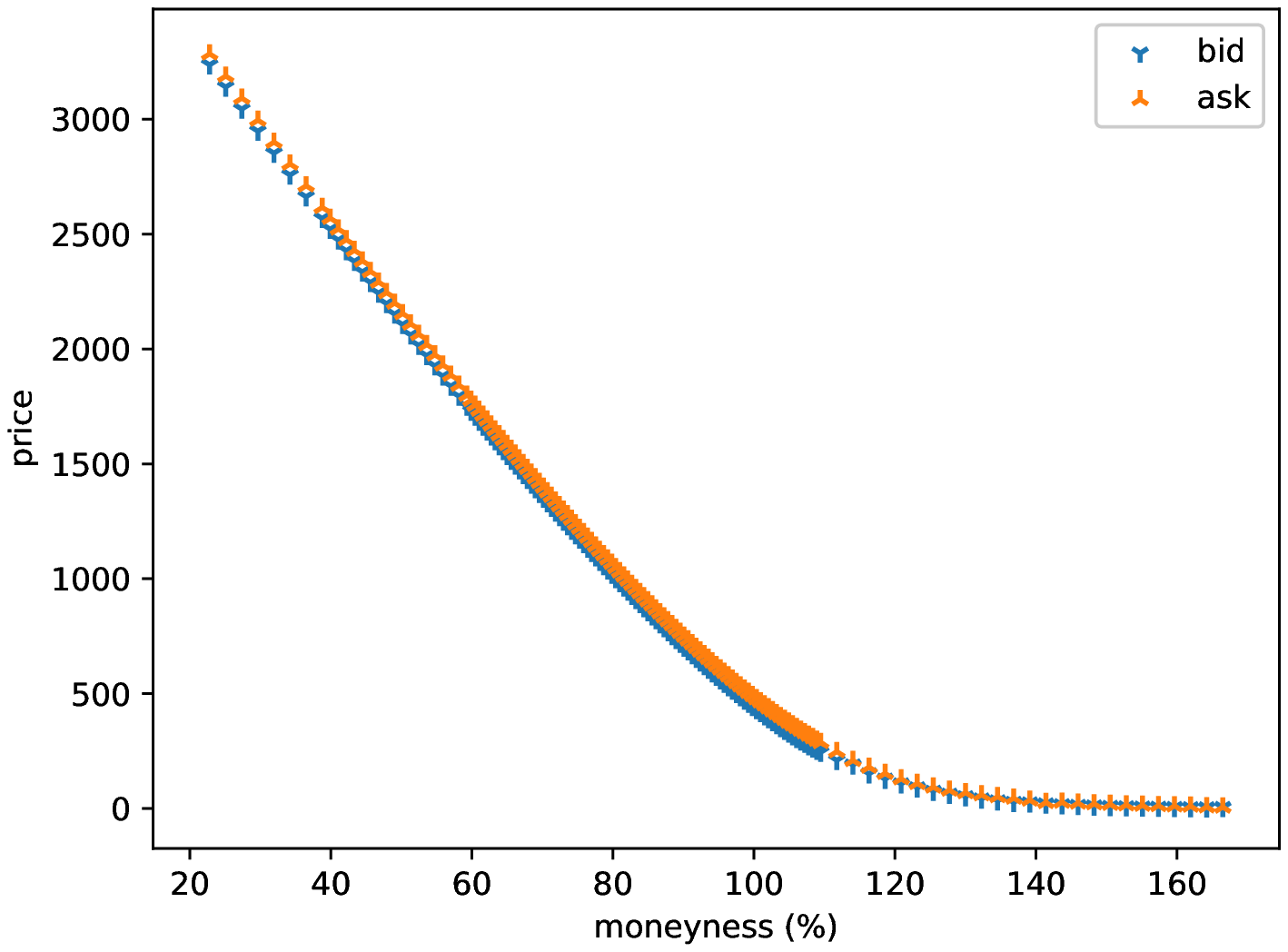}
     \caption{}\label{fig:SPX_bid_ask_call}
   \end{subfigure}
   \begin{subfigure}{.5\linewidth}
    \centering
     \includegraphics[scale=0.5]{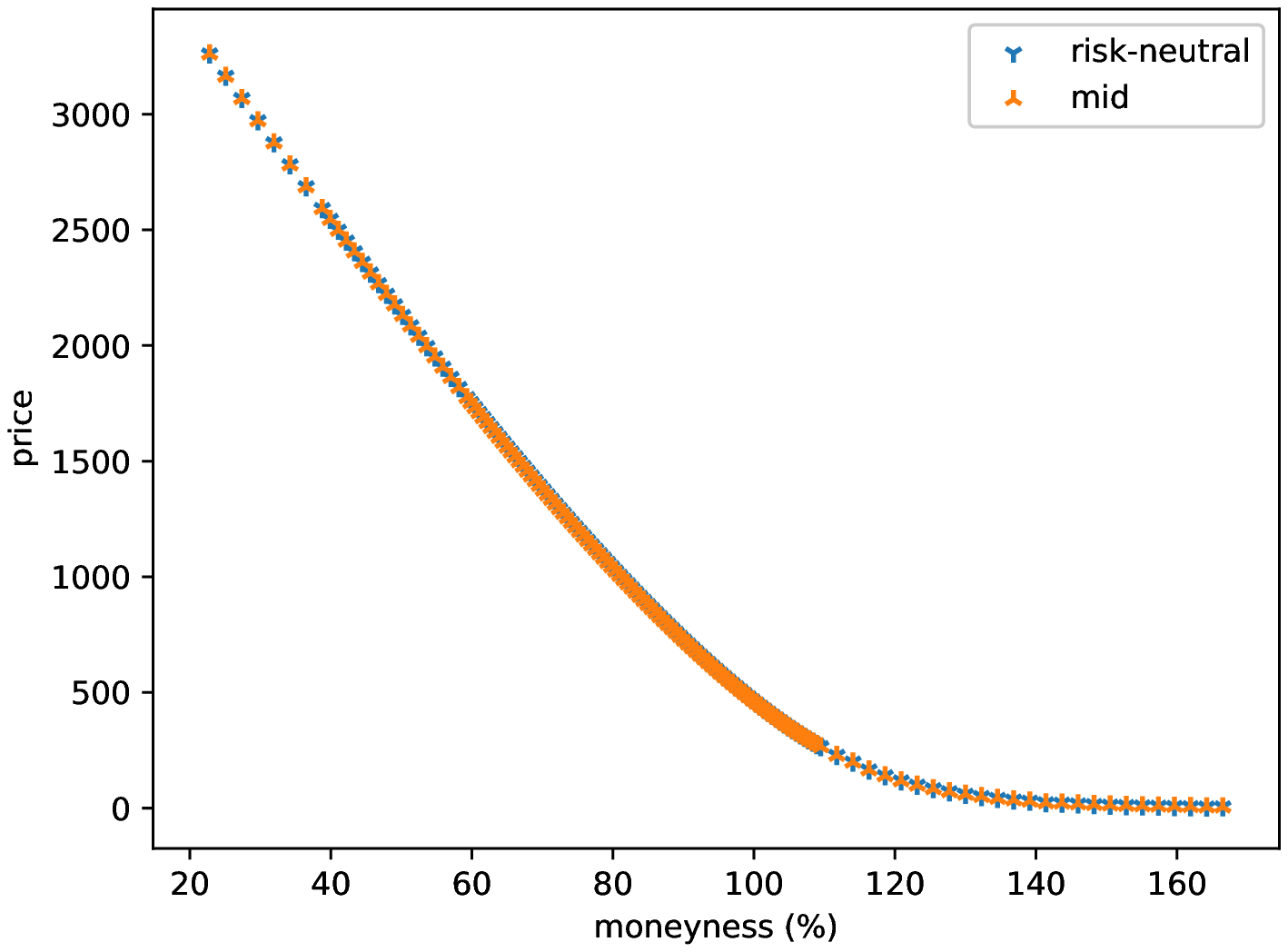}
     \caption{}\label{fig:SPX_price_call}
   \end{subfigure}
   \end{center}
      \caption{Bid and ask prices for European call options on the S\&P 500 index expiring in 886 days (Options Price Reporting Authority), panel (a), and their corresponding risk-neutral and mid counterparts, panel (b).\label{fig:SPX}}
\end{figure}

\begin{figure}[H]
\begin{center}
   \begin{subfigure}{.5\linewidth}
     \centering
     \includegraphics[scale=0.5]{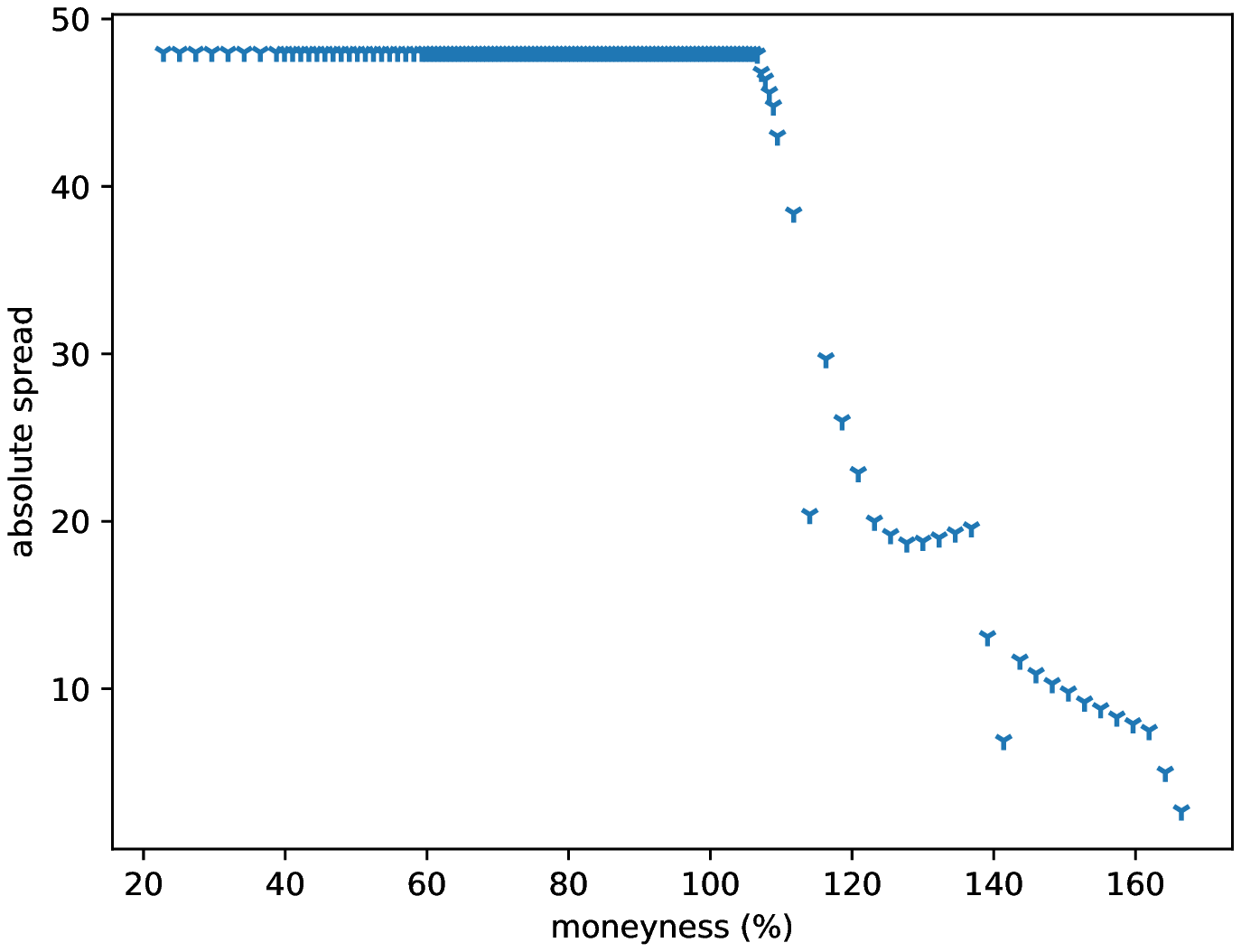}
     \caption{}\label{fig:SPX_absolute_spread_call}
   \end{subfigure}
   \begin{subfigure}{.5\linewidth}
    \centering
     \includegraphics[scale=0.5]{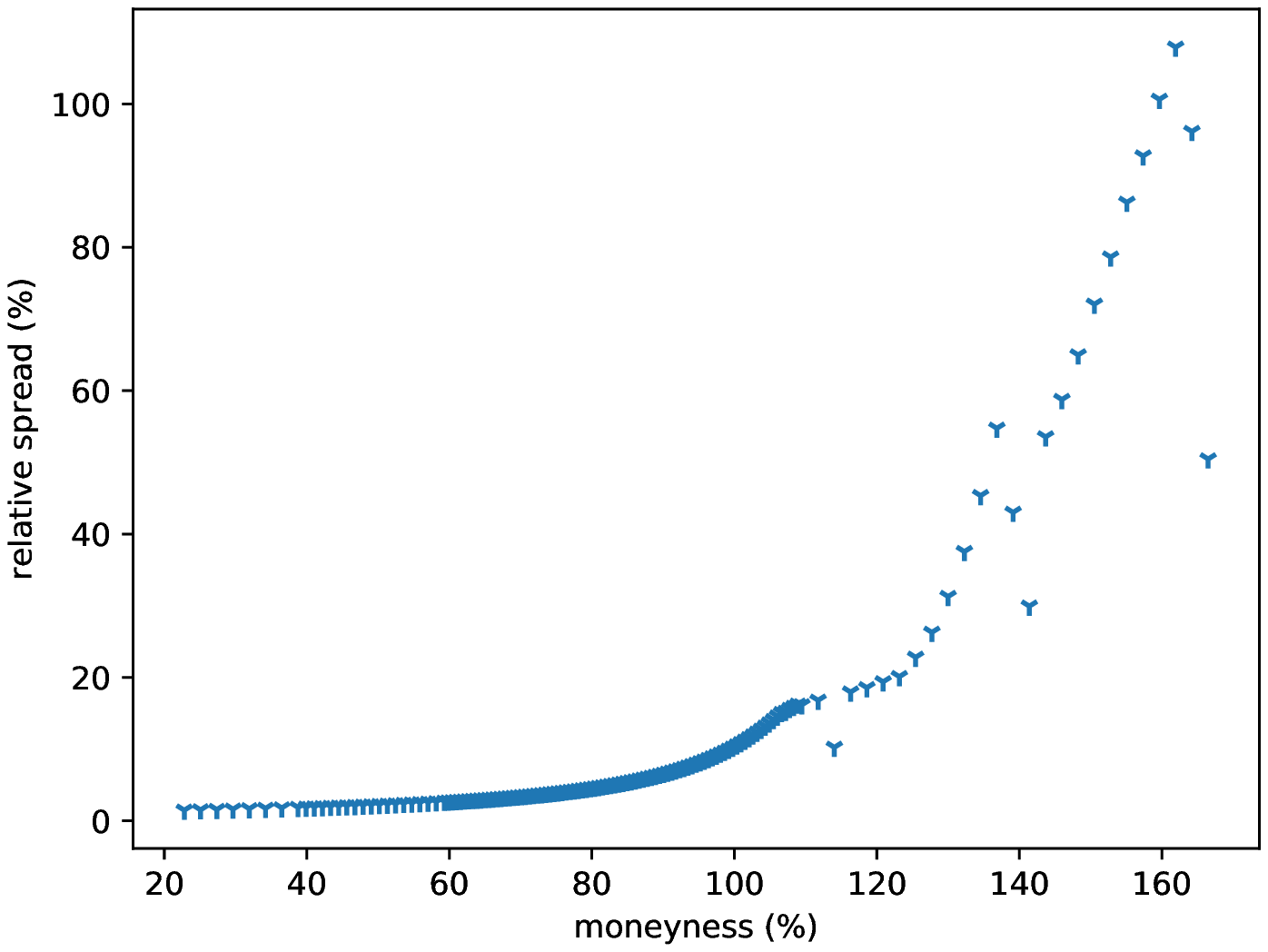}
     \caption{}\label{fig:SPX_relative_spread_call}
   \end{subfigure}
   \end{center}
   \caption{Absolute bid-ask spreads for the options considered in Figure \ref{fig:SPX}, panel (a), and their relative counterparts  (calculated with respect to mid prices), panel (b).}
\end{figure}

\begin{figure}[H]
\begin{center}
   \begin{subfigure}{.5\linewidth}
     \centering
     \includegraphics[scale=0.5]{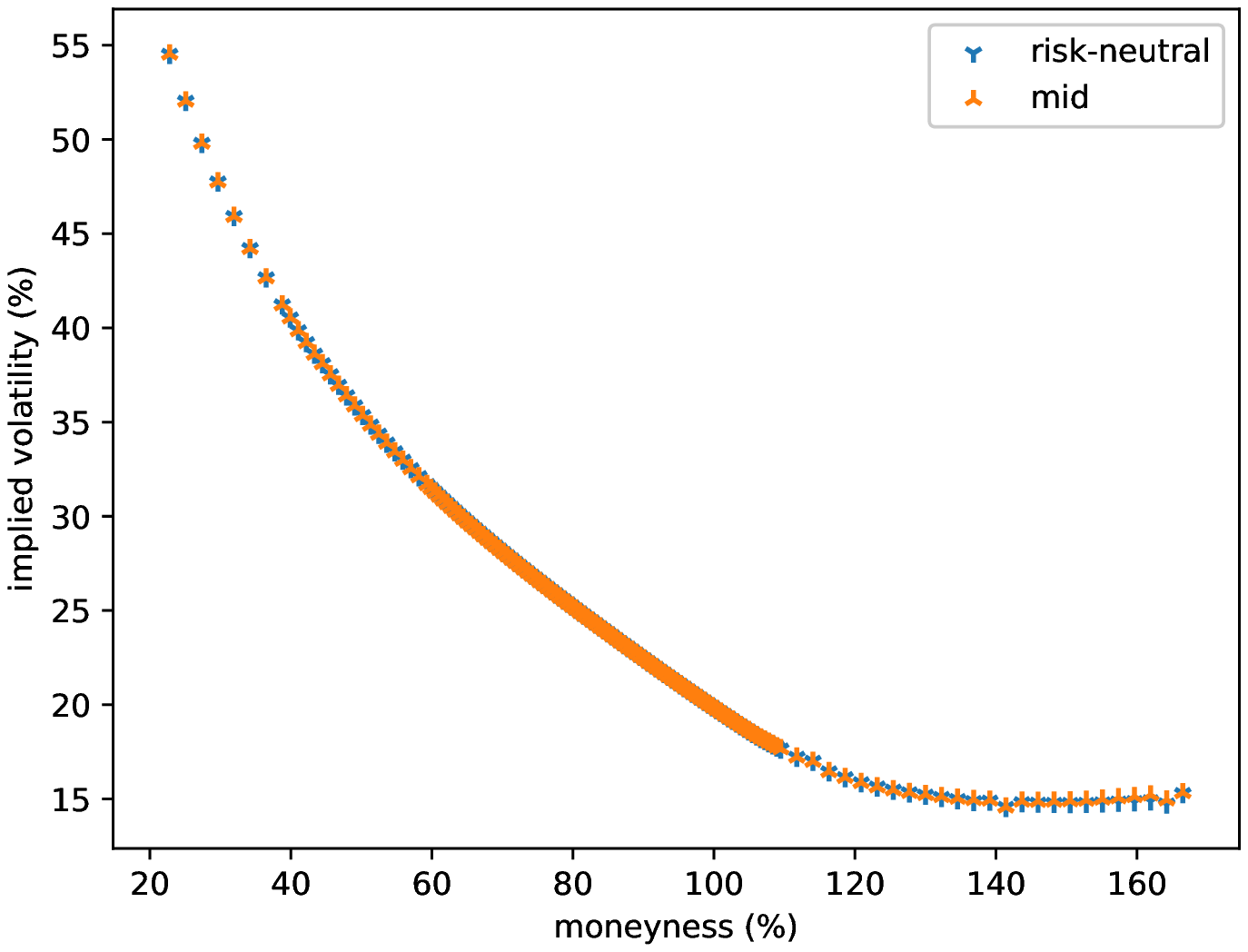}
     \caption{}\label{fig:SPX_implied_volatility_call}
   \end{subfigure}
   \begin{subfigure}{.5\linewidth}
    \centering
     \includegraphics[scale=0.5]{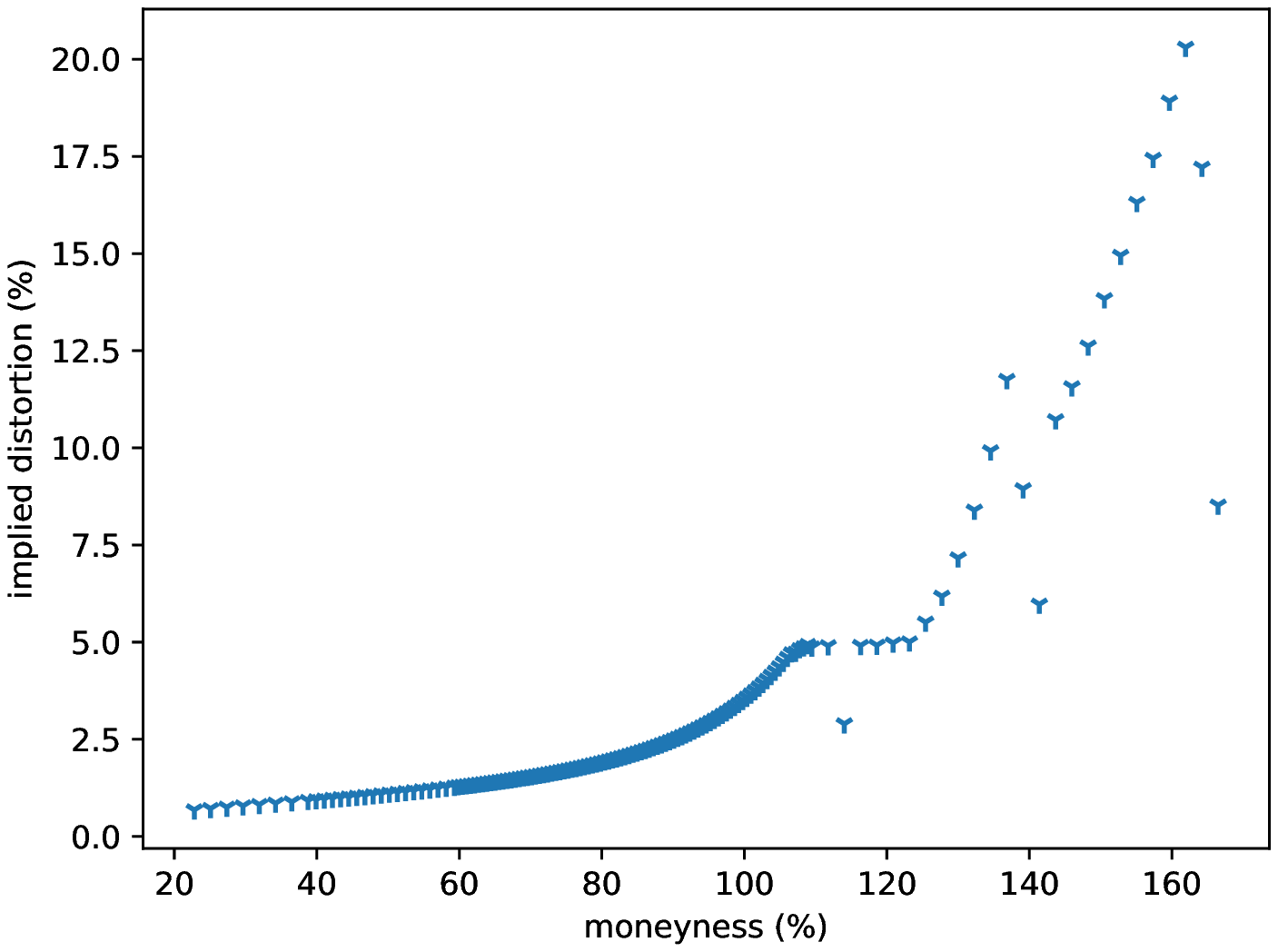}
     \caption{}\label{fig:SPX_implied_distortion_call}
   \end{subfigure}
   \end{center}
   \caption{Risk-neutral and mid implied volatilities, panel (a), and implied liquidity levels, panel (b), for the options considered in Figure \ref{fig:SPX}.}
\end{figure}


We now consider European put options on the FTSE MIB index. In this case fewer strikes are traded compared to the S\&P 500 case. However, as Figure \ref{fig:FTSE_bid_ask_put} illustrates,  these options are still very liquid; see also Figures \ref{fig:FTSE_absolute_spread_put} and \ref{fig:FTSE_relative_spread_put}. This is further confirmed by the low levels of the implied liquidity parameter of Figure \ref{fig:FTSE_implied_distortion_put}. We therefore still obtain risk-neutral implied volatilities and prices that are closely approximated by their mid counterparts; see Figures \ref{fig:FTSE_implied_volatility_put} and \ref{fig:FTSE_price_put}, respectively.

\begin{figure}[H]
\begin{center}
   \begin{subfigure}{.5\linewidth}
     \centering
     \includegraphics[scale=0.5]{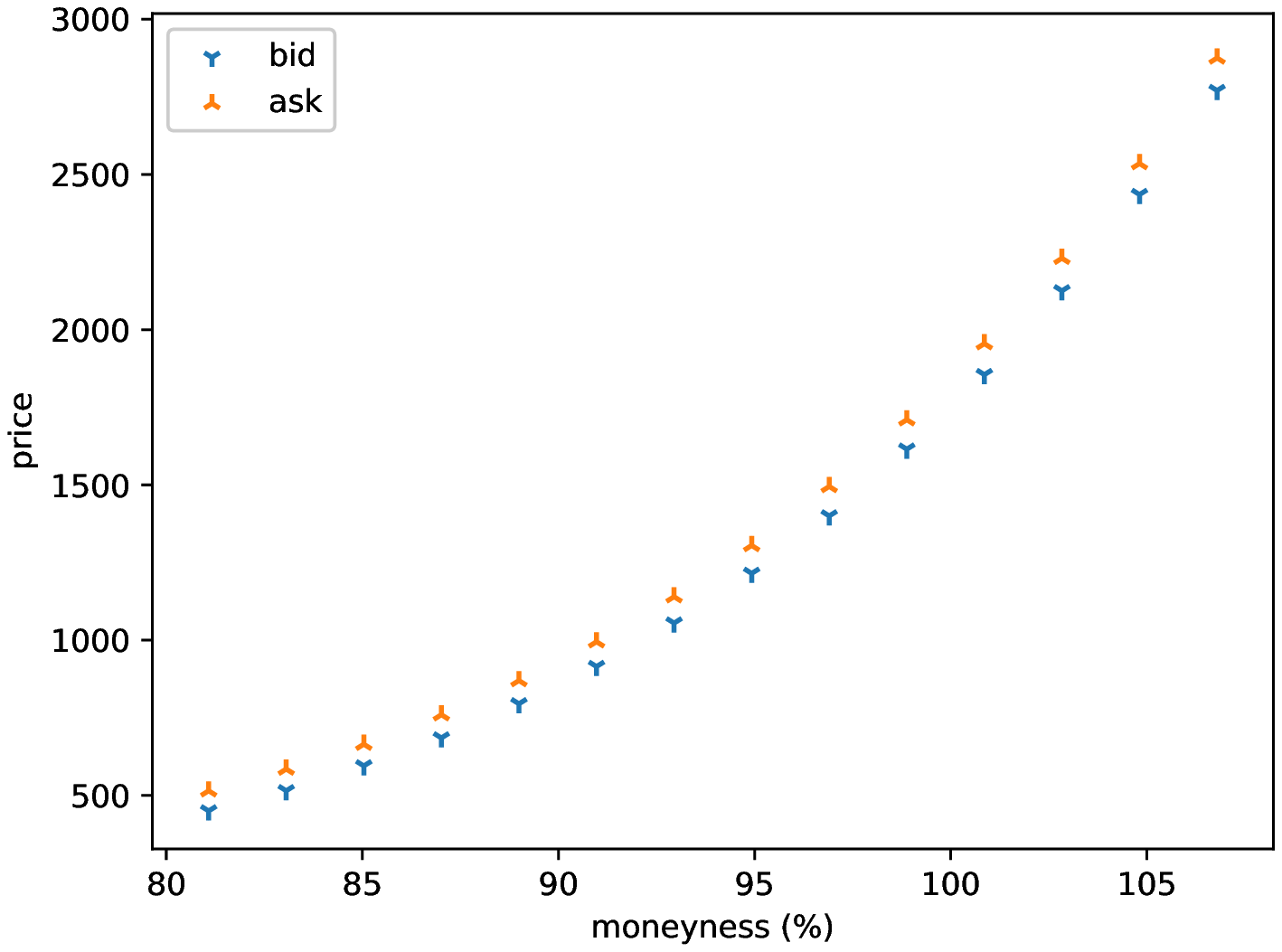}
     \caption{}\label{fig:FTSE_bid_ask_put}
   \end{subfigure}
   \begin{subfigure}{.5\linewidth}
    \centering
     \includegraphics[scale=0.5]{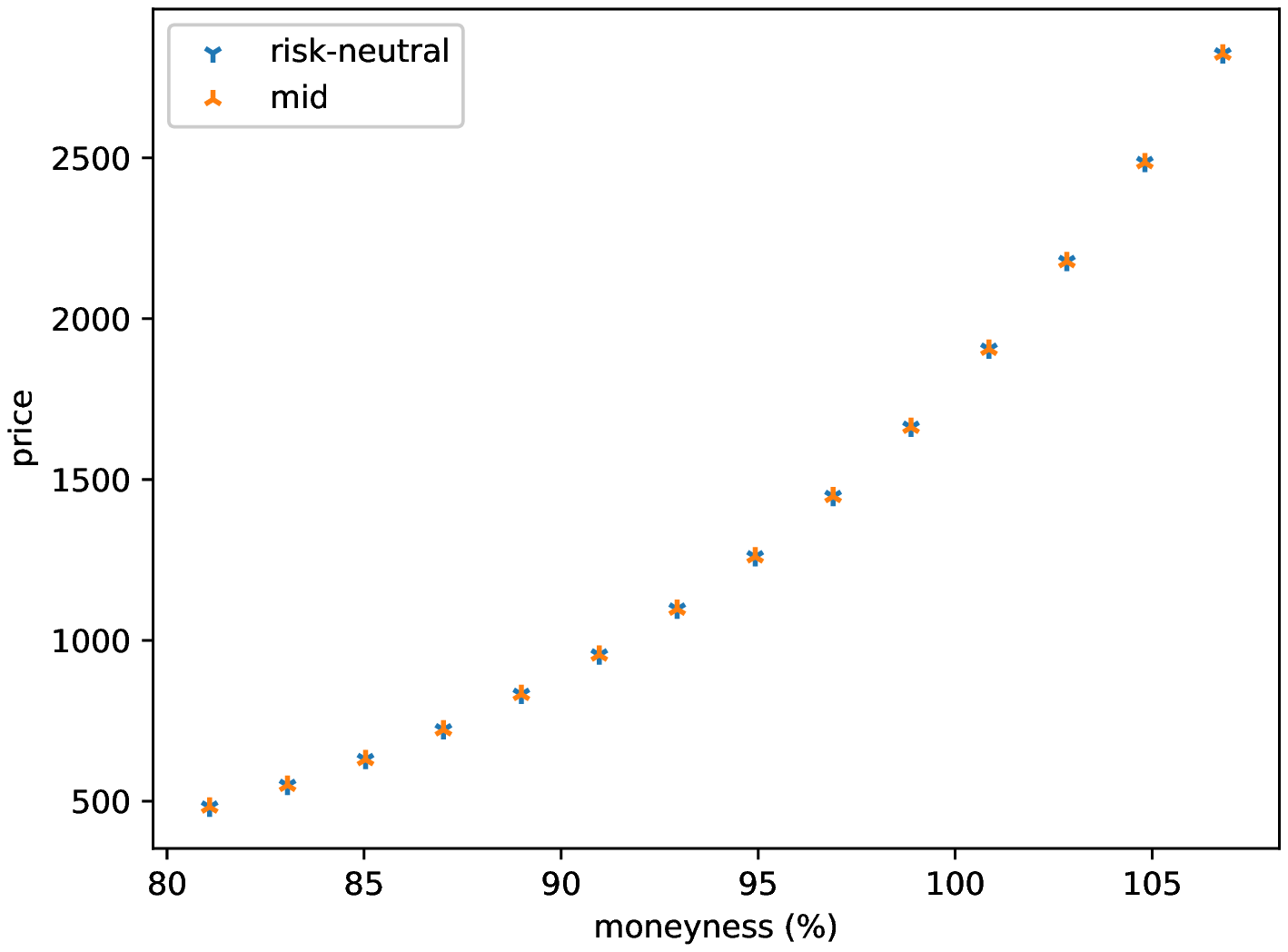}
     \caption{}\label{fig:FTSE_price_put}
   \end{subfigure}
   \end{center}
   \caption{Bid and ask prices for European put options on the FTSE MIB (Milan Stock Exchange) expiring in 249 days, panel (a), and their corresponding risk-neutral and mid counterparts, panel (b).\label{fig:FTSE}}
\end{figure}

\begin{figure}[H]
\begin{center}
   \begin{subfigure}{.5\linewidth}
     \centering
     \includegraphics[scale=0.5]{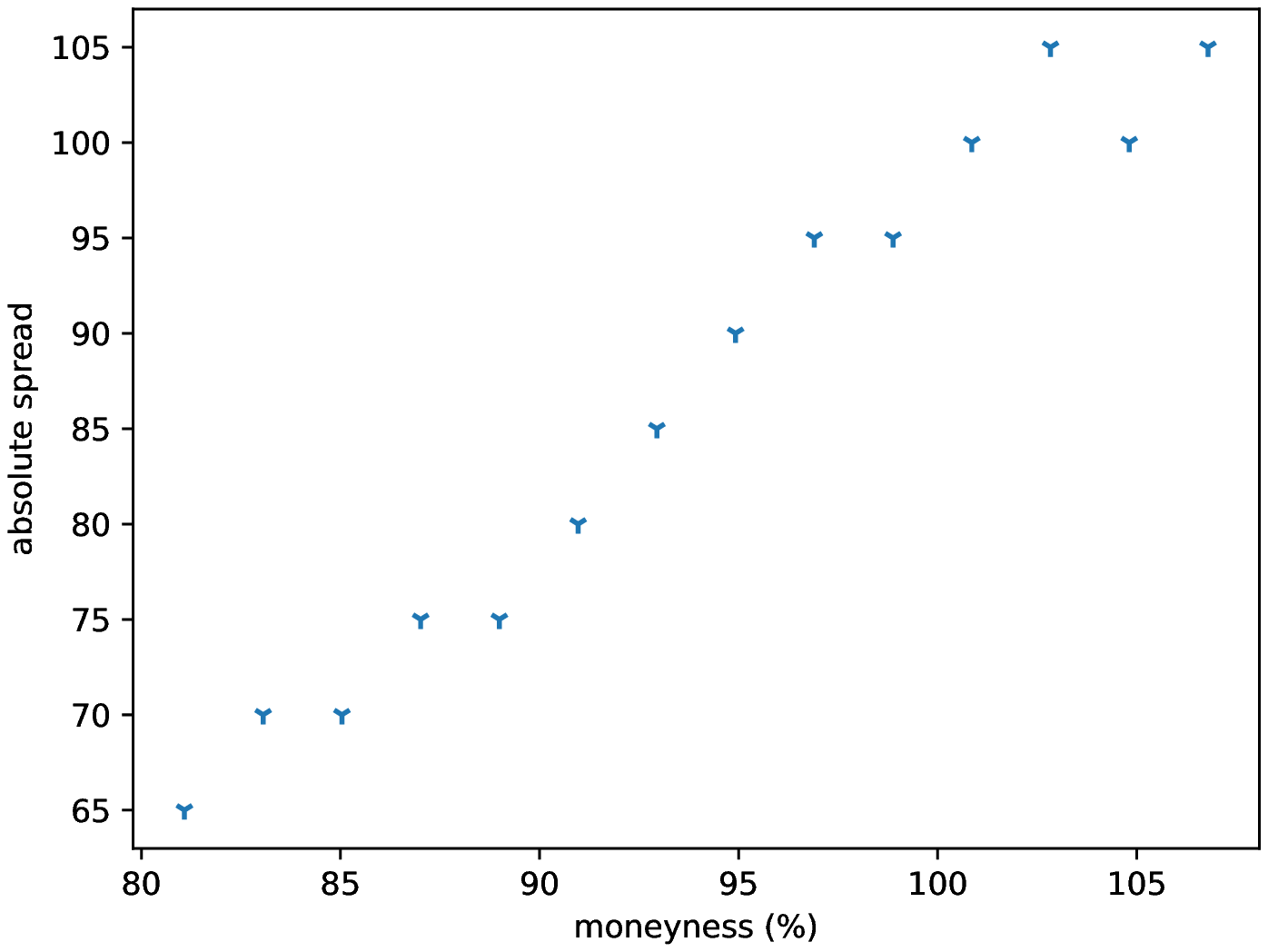}
     \caption{}\label{fig:FTSE_absolute_spread_put}
   \end{subfigure}
   \begin{subfigure}{.5\linewidth}
    \centering
     \includegraphics[scale=0.5]{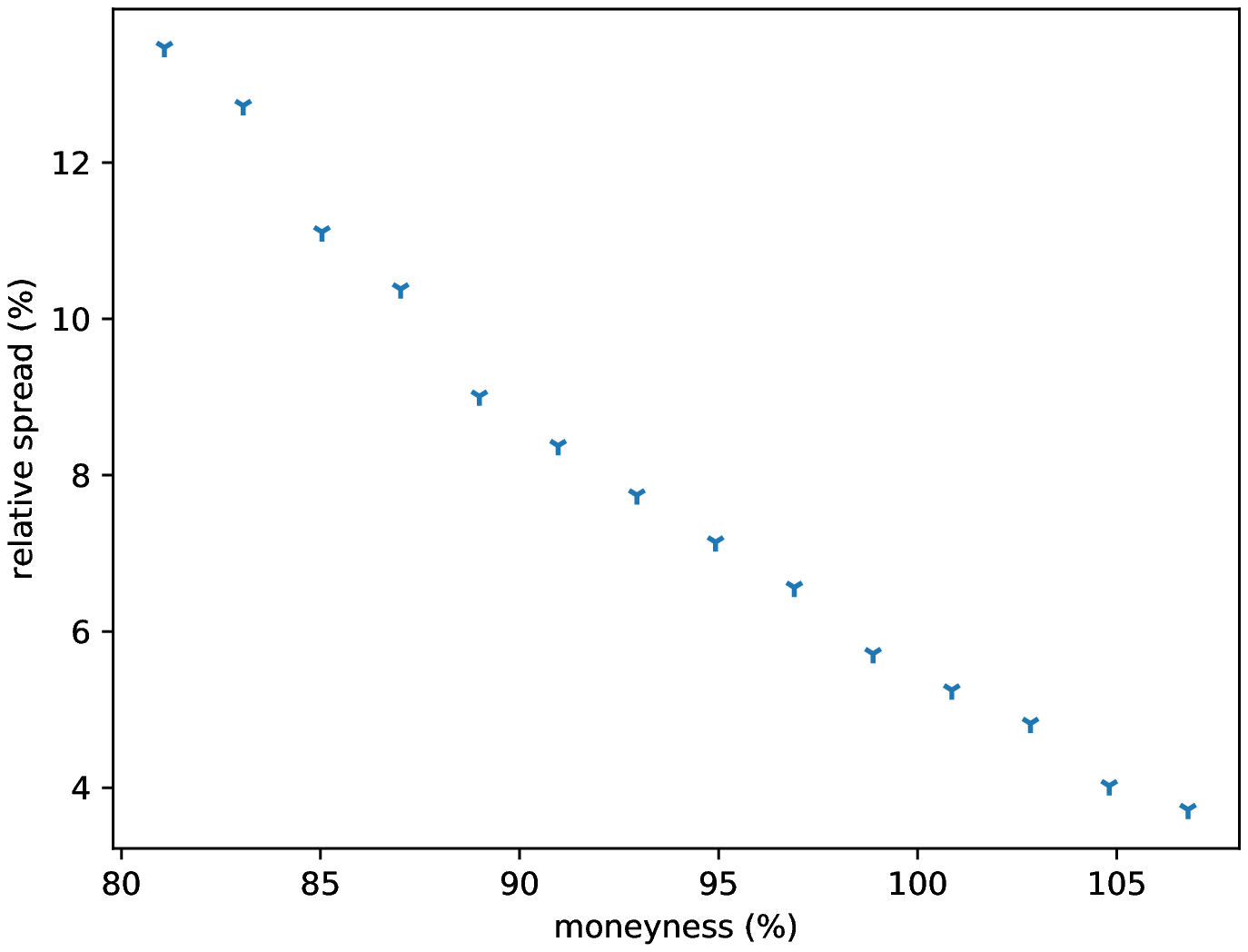}
     \caption{}\label{fig:FTSE_relative_spread_put}
   \end{subfigure}
   \end{center}
   \caption{Absolute bid-ask spreads for the options considered in Figure \ref{fig:FTSE}, panel (a), and their relative counterparts (calculated with respect to mid prices), panel (b).}
\end{figure}

\begin{figure}[H]
\begin{center}
   \begin{subfigure}{.5\linewidth}
     \centering
     \includegraphics[scale=0.5]{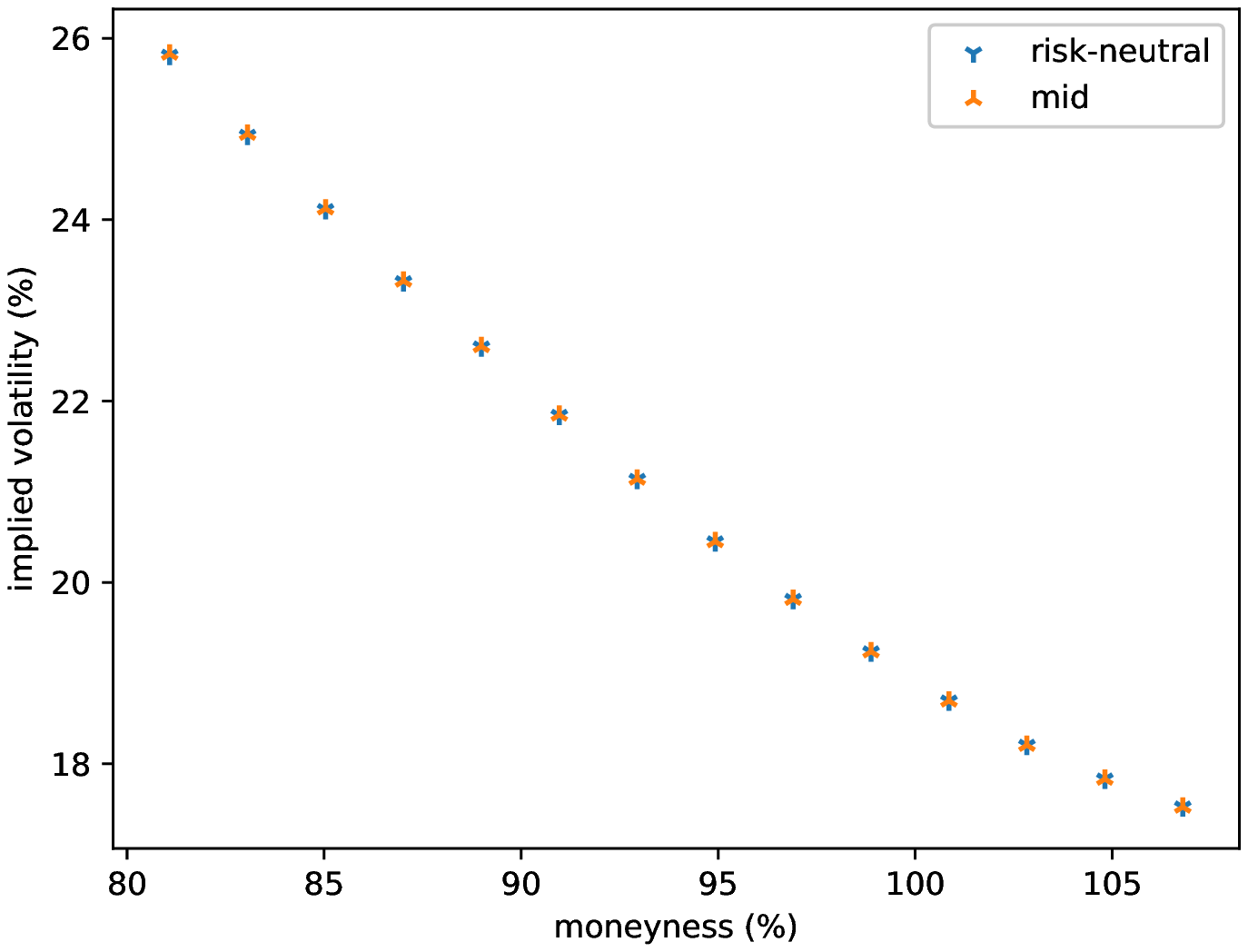}
     \caption{}\label{fig:FTSE_implied_volatility_put}
   \end{subfigure}
   \begin{subfigure}{.5\linewidth}
    \centering
     \includegraphics[scale=0.5]{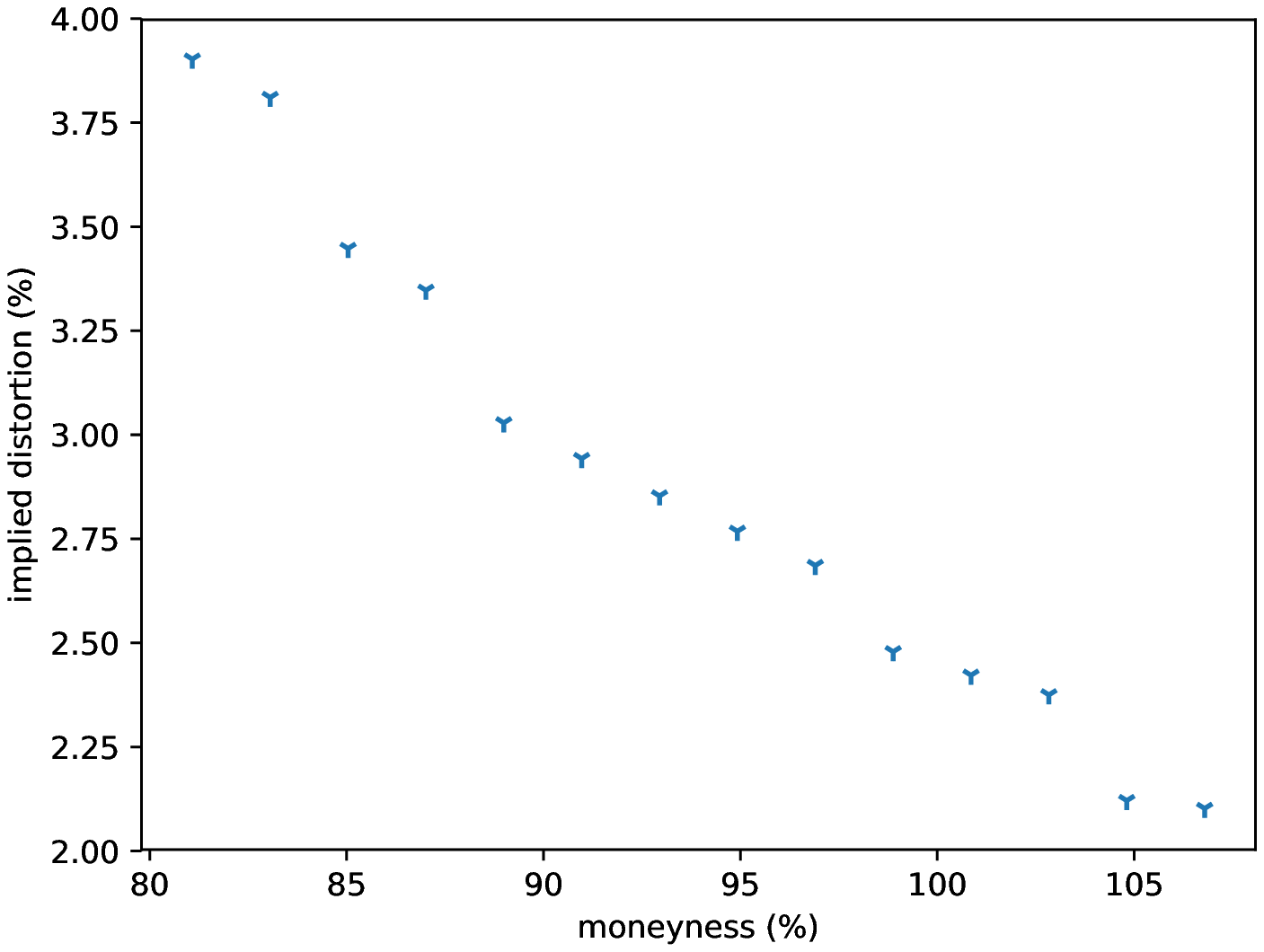}
     \caption{}\label{fig:FTSE_implied_distortion_put}
   \end{subfigure}
   \end{center}
   \caption{Risk-neutral and mid implied volatilities, panel (a), and implied liquidity levels, panel (b), for the options considered in Figure \ref{fig:FTSE}.}
\end{figure}

We now consider European call options on UBS. As it is clear from Figures \ref{fig:UBS_bid_ask_call}, \ref{fig:UBS_absolute_spread_call}, \ref{fig:UBS_relative_spread_call} and \ref{fig:UBS_implied_distortion_call}, these options are less liquid than those considered in the two cases above (i.e., those on the S\&P 500 and the FTSE MIB indices, respectively). Therefore, this results in risk-neutral and implied volatility smiles that, for deep out-of-the-money, but especially for deep in-the-money options, exhibit non-negligible differences, with mid implied volatilities overestimating their risk-neutral counterparts up to 2-3\% in the former case, and up to 9-10\% in the latter case; see Figure \ref{fig:UBS_implied_volatility_call}. Note that deep in-the-money and out-of-the-money options have small vegas, which leads to risk-neutral and mid option prices being close to each other; see figure \ref{fig:UBS_price_call}.

\begin{figure}[H]
\begin{center}
   \begin{subfigure}{.5\linewidth}
     \centering
     \includegraphics[scale=0.5]{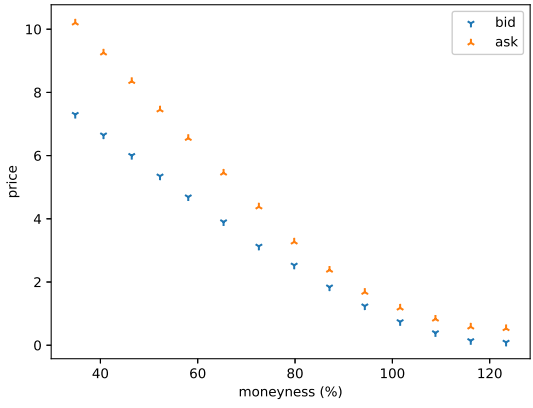}
     \caption{}\label{fig:UBS_bid_ask_call}
   \end{subfigure}
   \begin{subfigure}{.5\linewidth}
    \centering
     \includegraphics[scale=0.5]{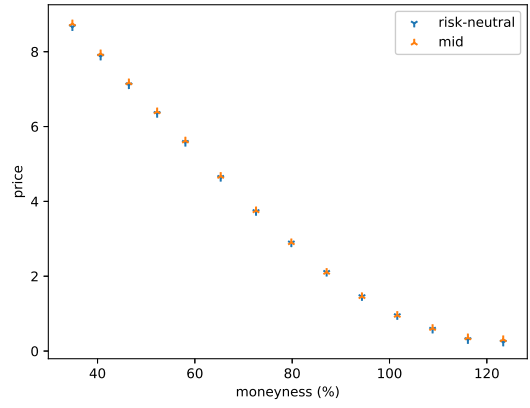}
     \caption{}\label{fig:UBS_price_call}
   \end{subfigure}
   \end{center}
   \caption{Bid and ask prices for European call options on UBS (Eurex) expiring in 345 days, panel (a), and their corresponding risk-neutral and mid counterparts, panel (b).\label{fig:UBS}}
\end{figure}

\begin{figure}[H]
\begin{center}
   \begin{subfigure}{.5\linewidth}
     \centering
     \includegraphics[scale=0.5]{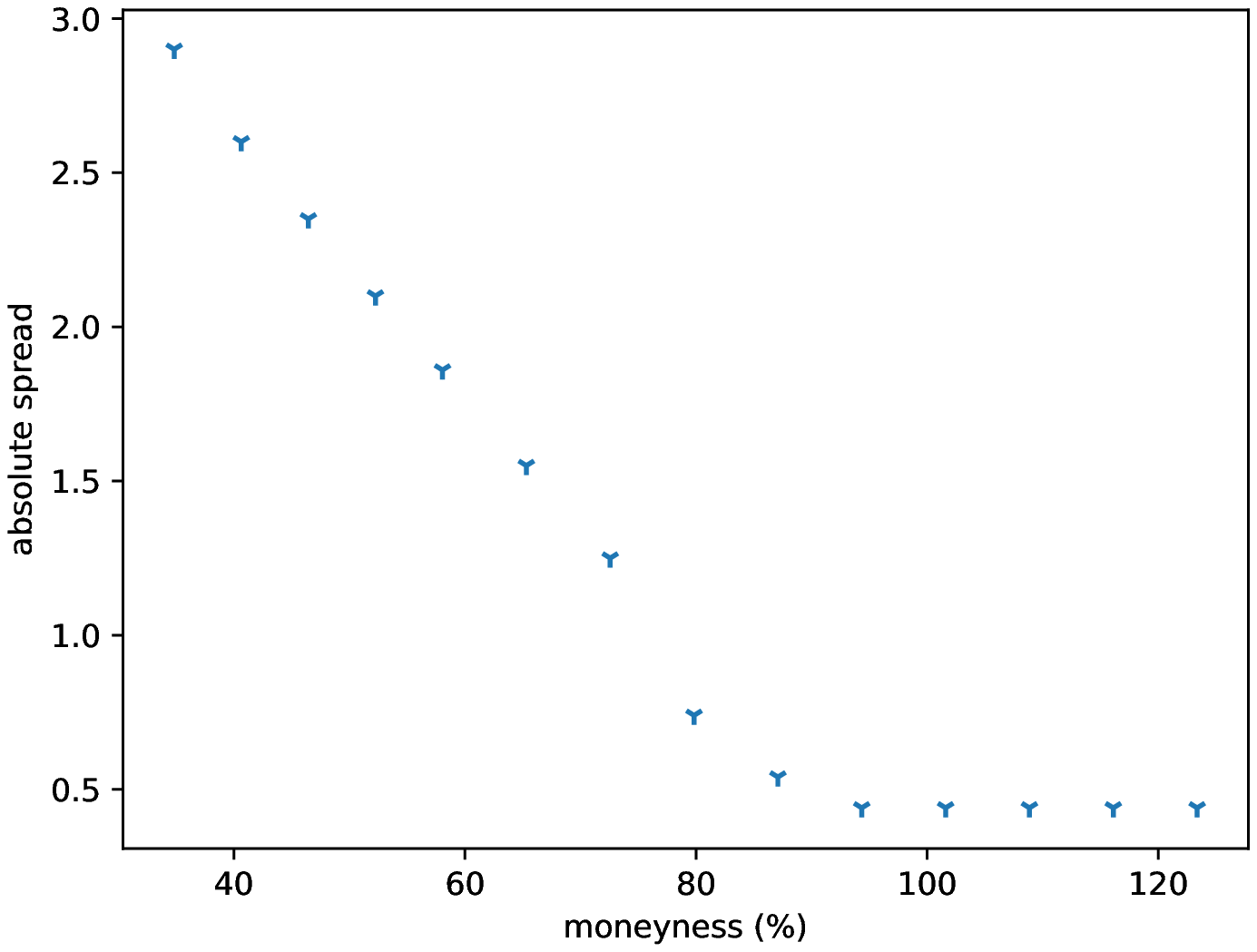}
     \caption{}\label{fig:UBS_absolute_spread_call}
   \end{subfigure}
   \begin{subfigure}{.5\linewidth}
    \centering
     \includegraphics[scale=0.5]{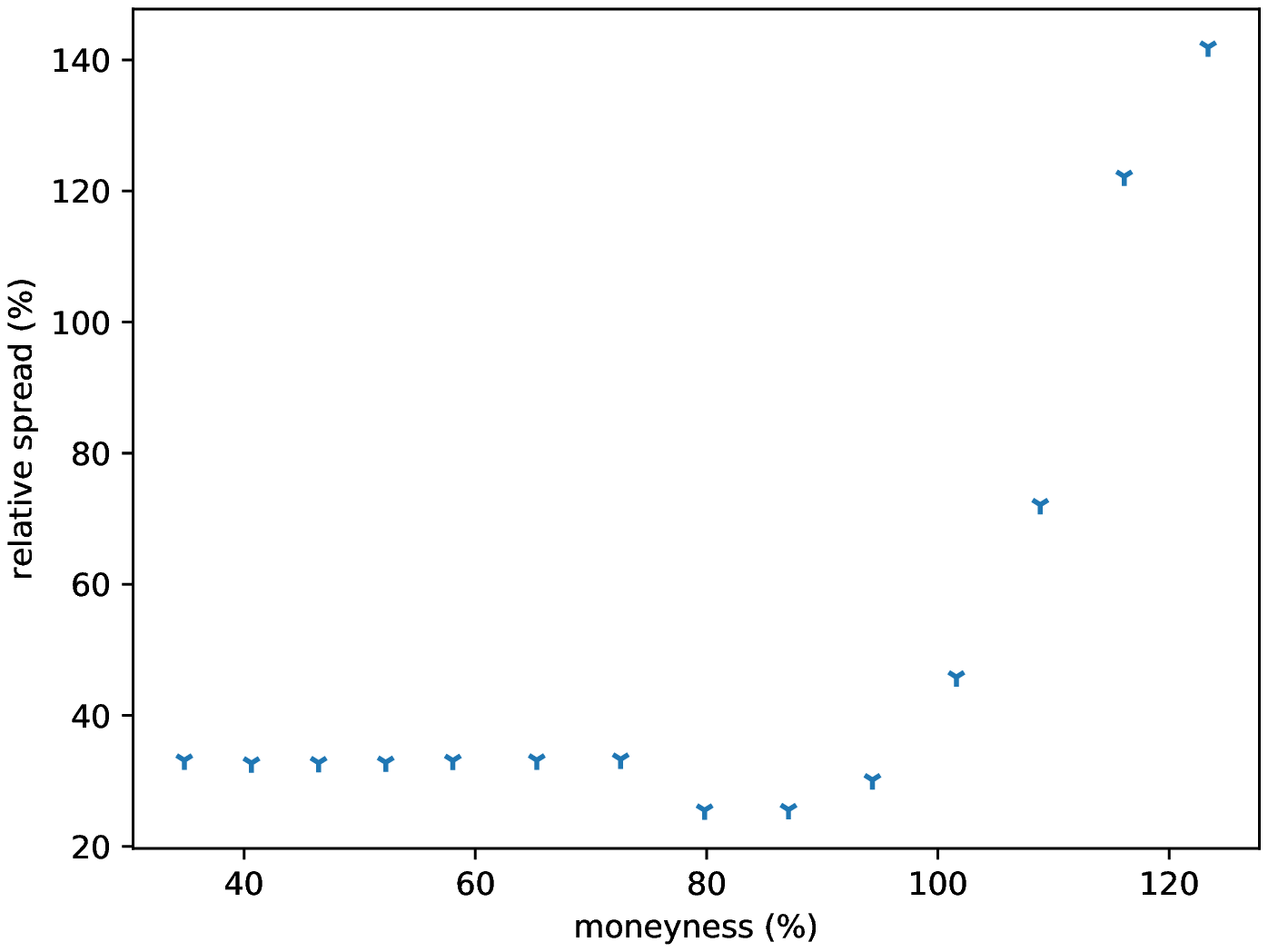}
     \caption{}\label{fig:UBS_relative_spread_call}
   \end{subfigure}
   \end{center}
   \caption{Absolute bid-ask spreads for the options considered in Figure \ref{fig:UBS}, panel (a), and their relative counterparts (calculated with respect to mid prices), panel (b).}
\end{figure}

\begin{figure}[H]
\begin{center}
   \begin{subfigure}{.5\linewidth}
     \centering
     \includegraphics[scale=0.5]{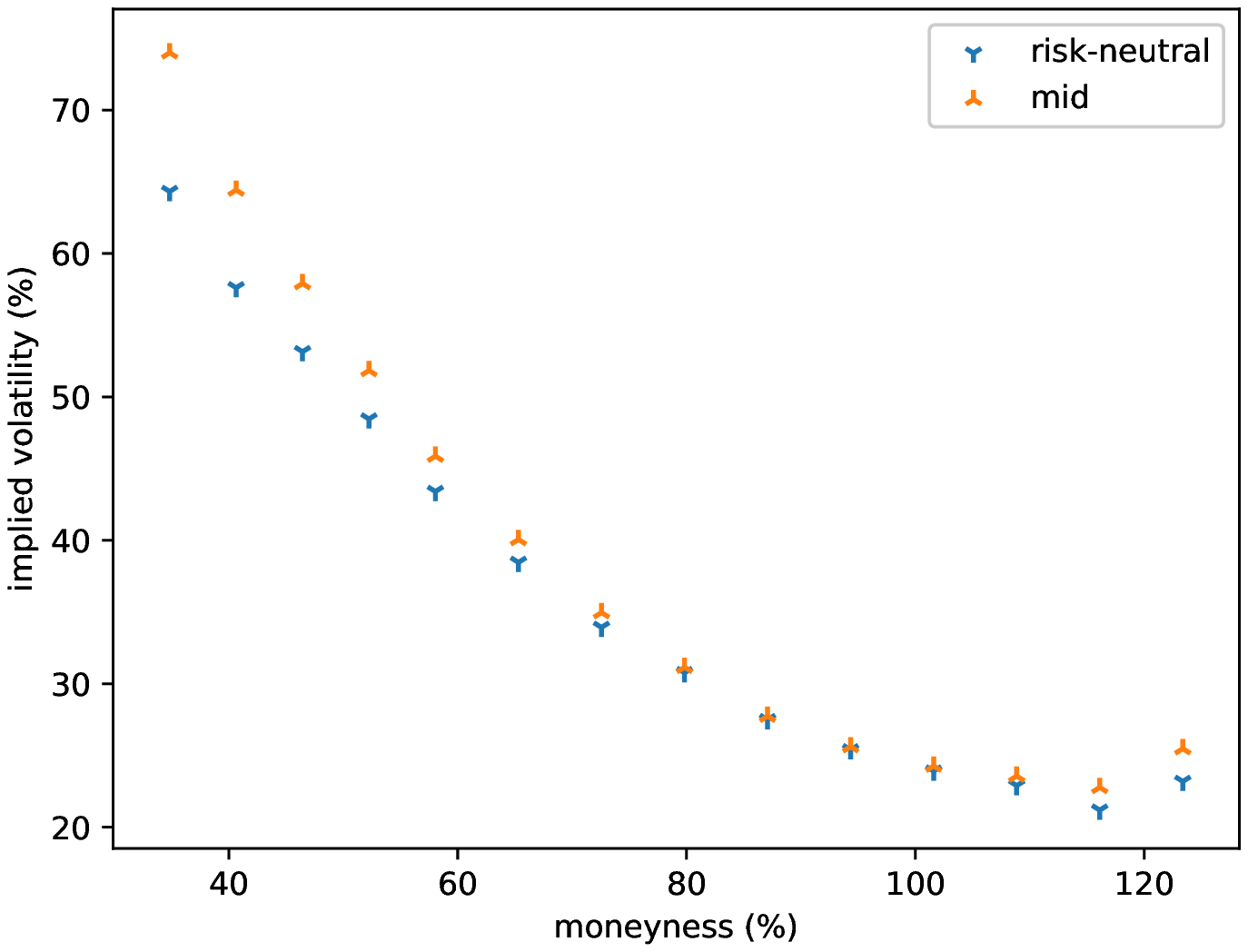}
     \caption{}\label{fig:UBS_implied_volatility_call}
   \end{subfigure}
   \begin{subfigure}{.5\linewidth}
    \centering
     \includegraphics[scale=0.5]{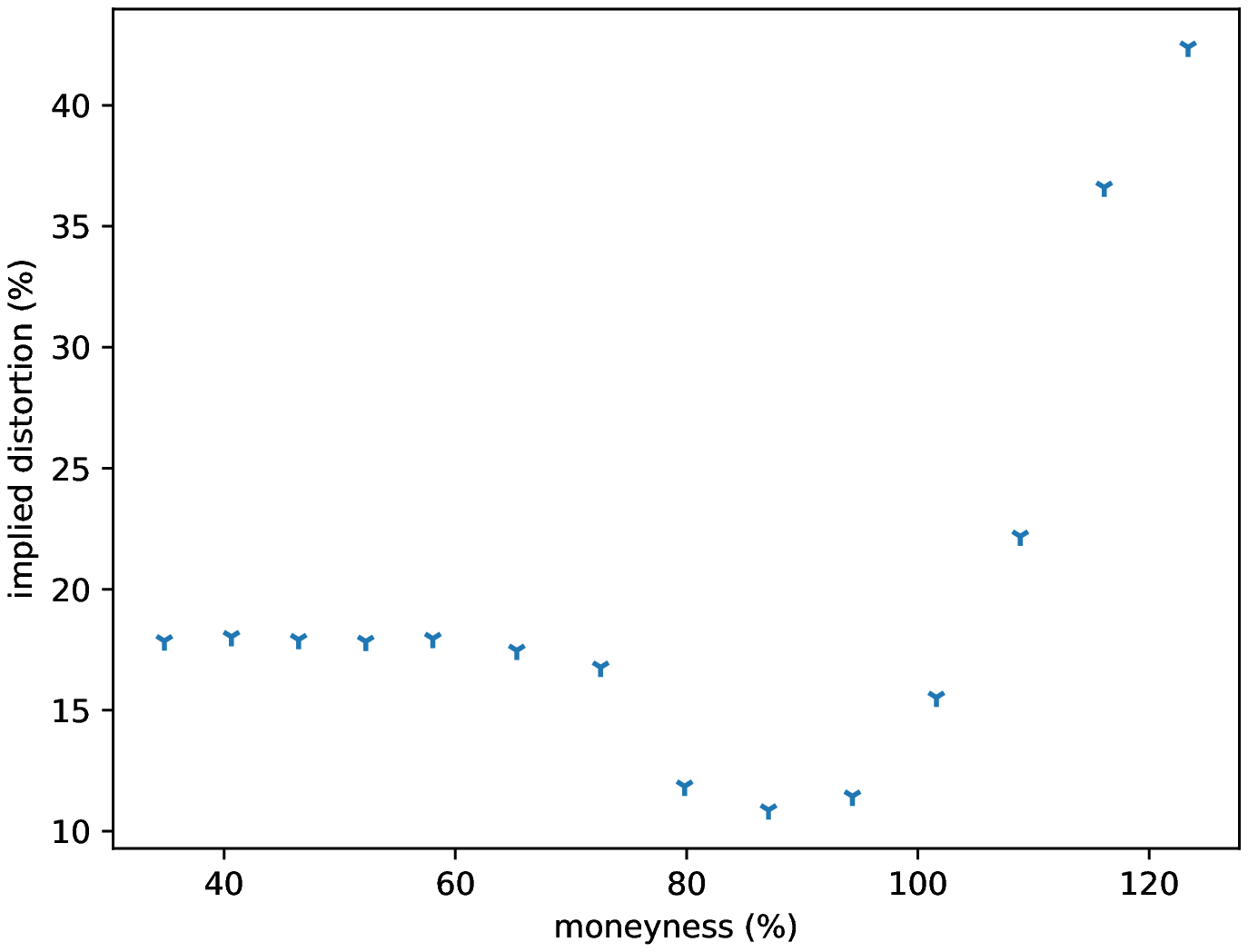}
     \caption{}\label{fig:UBS_implied_distortion_call}
   \end{subfigure}
   \end{center}
   \caption{Risk-neutral and mid implied volatilities, panel (a), and implied liquidity levels, panel (b), for the options considered in Figure \ref{fig:UBS}.}
\end{figure}

As the last case we consider that of European put options on Deutsche Telekom; see Figure \ref{fig:DTEG_bid_ask_put}. Also in these circumstances liquidity is not as high as in the cases of options on the S\&P 500 and FTSE MIB indices. This is illustrated by the high levels of bid-ask spreads displayed in Figures \ref{fig:DTEG_absolute_spread_put} and \ref{fig:DTEG_relative_spread_put}, and reiterated by the high implied liquidity levels of Figure \ref{fig:DTEG_implied_distortion_put}. Due to the low liquidity for both in-the-money and out-of-the-money options for this particular underlying, differences between risk-neutral and mid implied volatilities are considerable; see Figures \ref{fig:DTEG_implied_volatility_put}. In the former case we observe mid implied volatilities underestimating their risk-neutral counterparts up to 9-10\%, while in the latter case mid implied volatilities overestimate risk-neutral ones, with differences up to 14-15\%. Also in this case mid prices are good proxies for their risk-neutral counterparts; see Figure \ref{fig:DTEG_price_put}. This is again due to the fact that close to the at-the-money point liquidity is high, and far from it, even if liquidity decreases, options are not very sensitive to volatility changes.

\begin{figure}[H]
\begin{center}
   \begin{subfigure}{.5\linewidth}
     \centering
     \includegraphics[scale=0.5]{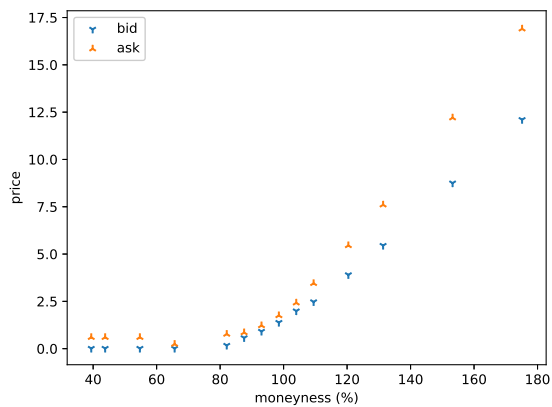}
     \caption{}\label{fig:DTEG_bid_ask_put}
   \end{subfigure}
   \begin{subfigure}{.5\linewidth}
    \centering
     \includegraphics[scale=0.5]{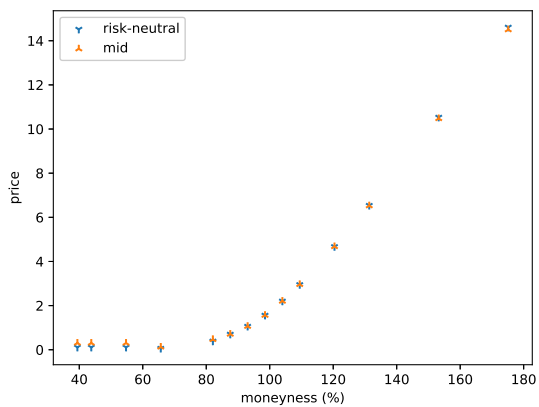}
     \caption{}\label{fig:DTEG_price_put}
   \end{subfigure}
   \end{center}
   \caption{Bid and ask prices for European put options on Deutsche Telekom (Eurex) expiring in 345 days, panel (a), and their corresponding risk-neutral and mid counterparts, panel (b).\label{fig:DTEG}}
\end{figure}

\begin{figure}[H]
\begin{center}
   \begin{subfigure}{.5\linewidth}
     \centering
     \includegraphics[scale=0.5]{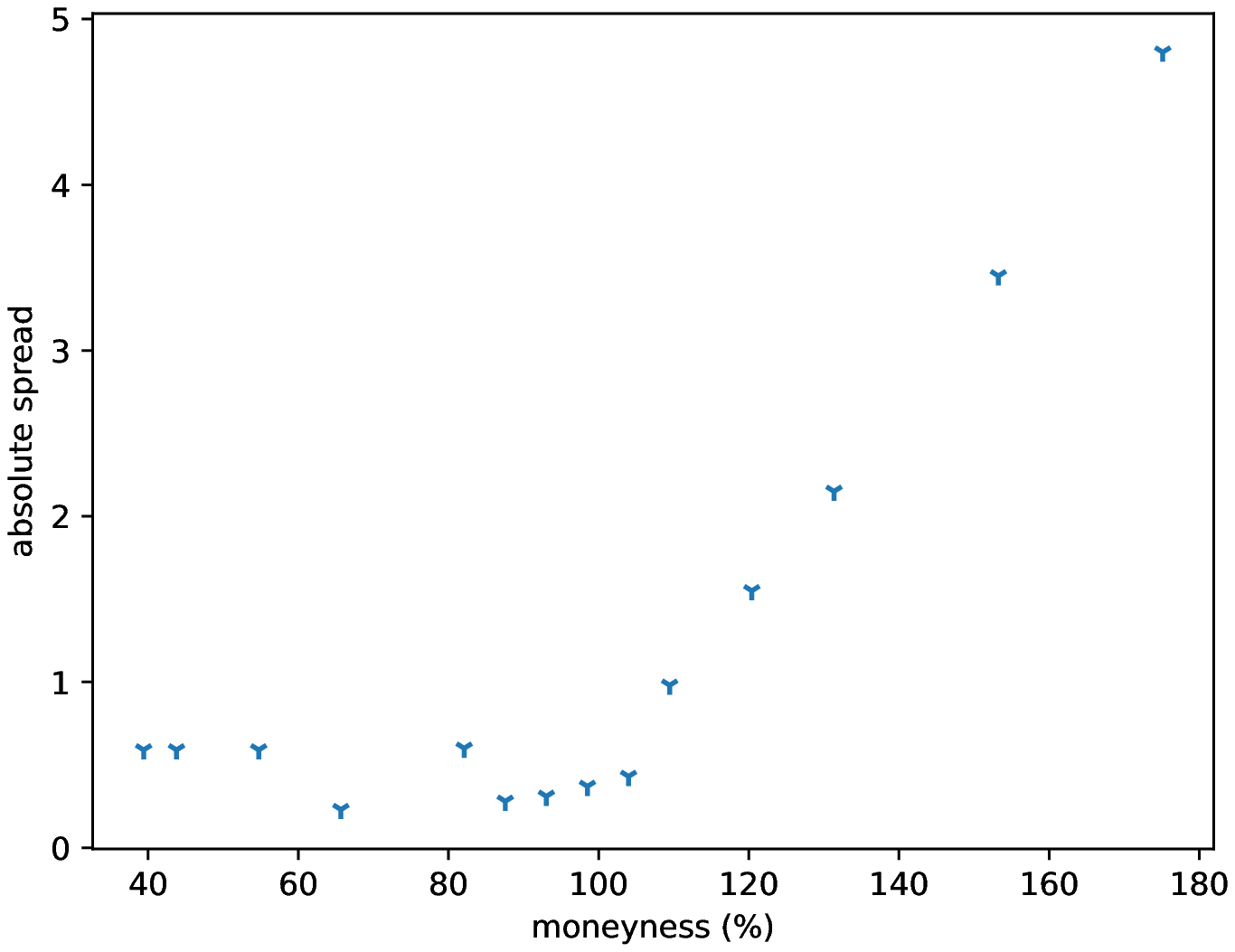}
     \caption{}\label{fig:DTEG_absolute_spread_put}
   \end{subfigure}
   \begin{subfigure}{.5\linewidth}
    \centering
     \includegraphics[scale=0.5]{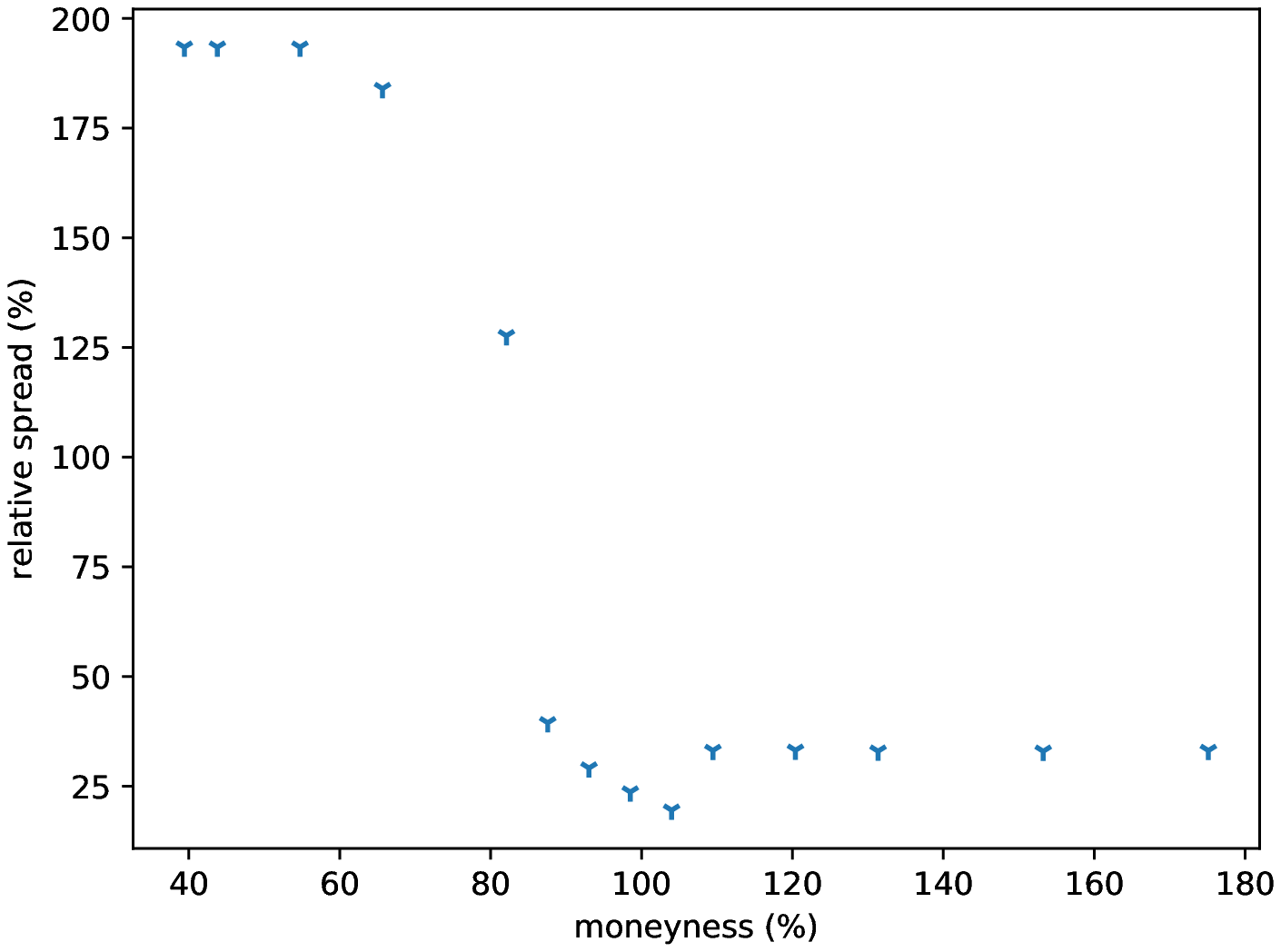}
     \caption{}\label{fig:DTEG_relative_spread_put}
   \end{subfigure}
   \end{center}
   \caption{Absolute bid-ask spreads for the options considered in Figure \ref{fig:DTEG}, panel (a), and their relative counterparts (calculated with respect to mid prices), panel (b).\label{fig:DTEG_spreads}}
\end{figure}

\begin{figure}[H]
\begin{center}
   \begin{subfigure}{.5\linewidth}
     \centering
     \includegraphics[scale=0.5]{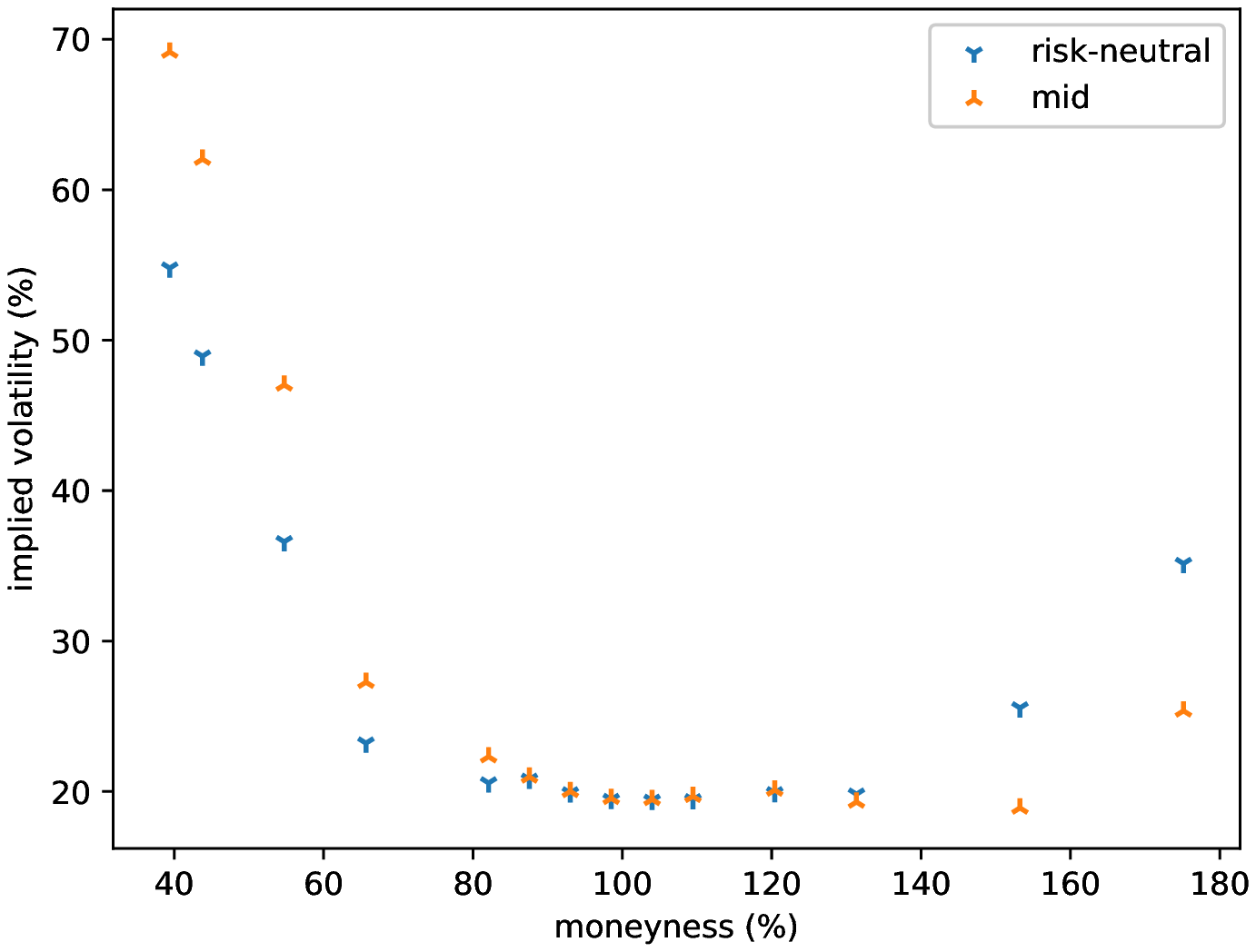}
     \caption{}\label{fig:DTEG_implied_volatility_put}
   \end{subfigure}
   \begin{subfigure}{.5\linewidth}
    \centering
     \includegraphics[scale=0.5]{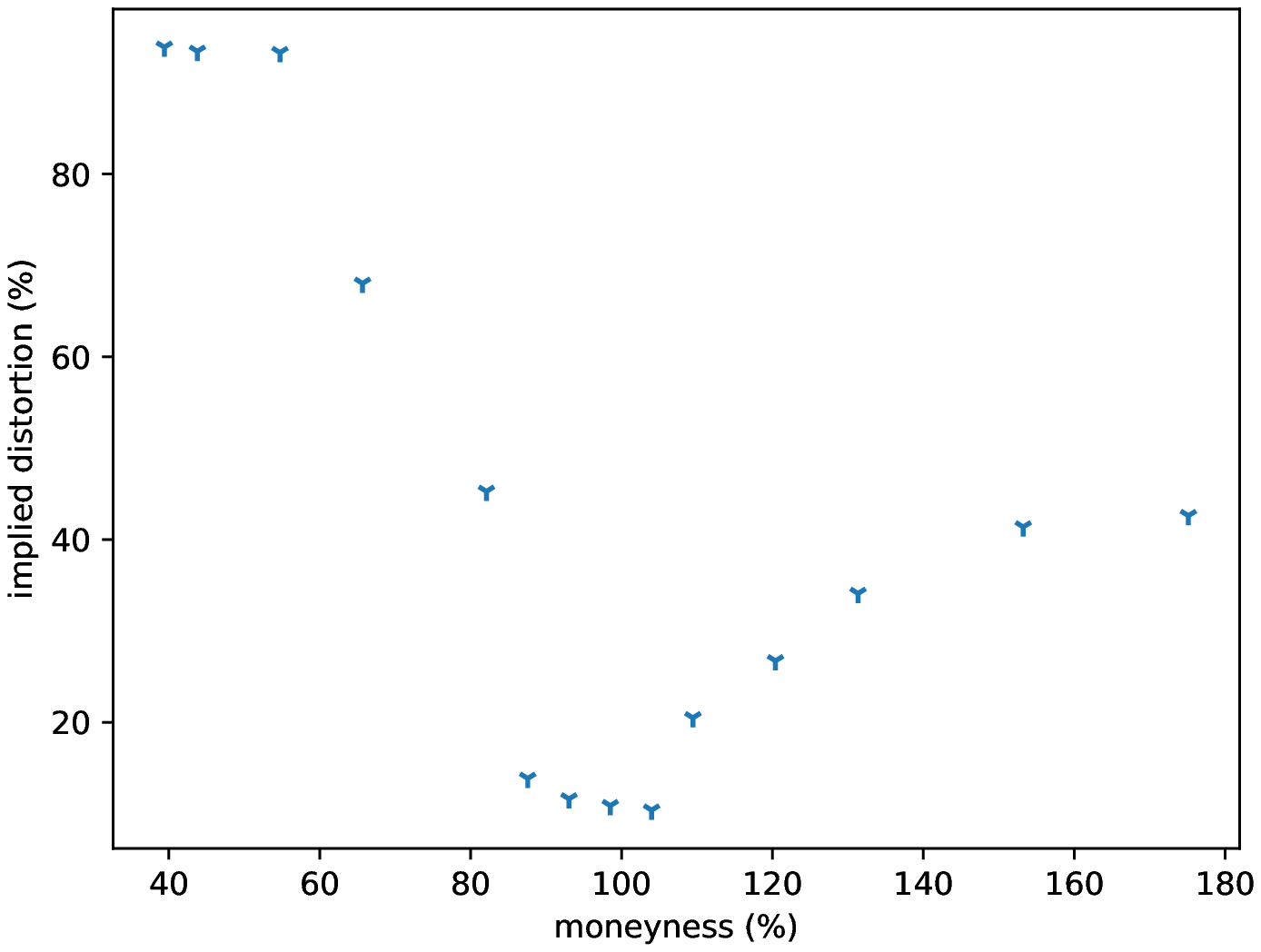}
     \caption{}\label{fig:DTEG_implied_distortion_put}
   \end{subfigure}
   \end{center}
   \caption{Risk-neutral and mid implied volatilities, panel (a), and implied liquidity levels, panel (b), for the options considered in Figure \ref{fig:DTEG}.}
\end{figure}

Overall, we see that for very liquid instruments mid implied volatilities are very well approximated by their risk-neutral counterparts, as expected. When liquidity decreases, however, for in-the-money and out-of-the-money options implied risk-neutral volatilities might differ noticeably from mid volatilities. Predictably, this does not have a considerable impact on option prices. This is because close to the at-the-money point liquidity is in general high, making risk-neutral and mid implied volatilities close to each other. On the other hand, for in-the-money and out-of-the-money options liquidity can considerably affect volatilities. However, these options have low vegas, which makes their prices not very sensitive to changes in the volatility of the underlying. Nonetheless, whether risk-neutral and mid prices are close to each other is beside the point: for both risk-neutral and mid prices there is no liquidity in the market, so from a trading perspective only bid and ask prices matter. What we are interested in is assessing whether implied volatilities can be extracted from traded option quotes in a consistent manner with the risk-neutral framework. In particular, what we believe is important is to assess how considering bid and ask prices as a starting point instead of their mid counterparts can affect the shape of the volatility smile. As we have seen in the examples considered, computing implied volatilities from bid and ask prices instead of from mid prices in some cases can have a large impact on the implied volatility figures, and this can have implications in different contexts. As an example, if a smile model is calibrated by means of a least-square approach to the available implied volatilities, then differences as those observed in Figures \ref{fig:UBS_implied_volatility_call} and \ref{fig:DTEG_implied_volatility_put} would result in risk-neutral and mid volatility smiles with different shapes, as in-the-money and out-of-the-money implied volatilities would affect the calibration as a whole. Furthermore, when simulation models for over-the-counter derivatives, as for instance credit models, calibrated to implied volatilities are used, then choosing to input risk-neutral rather than mid implied volatilities might have a non-marginal impact for running contracts which, trough time, due to market movements ended up being in-the-money or out-of-the-money. Therefore, the examples considered, as well as the theoretical consistency of the methodology outlined in this article with the risk-neutral paradigm (paradigm that is not satisfied when mid prices are considered), make liquidity-free implied implied volatilities a potentially useful tool in financial modeling.

\section{Conclusion}\label{sec:conclusion}
In this article we have considered the problem of computing implied volatilities from bid and ask European option prices directly, i.e., without relying on mid price approximations. The methodology we have outlined relies on the conic finance framework of \cite{cherny2009}. Based on the results of \cite{michielon} it is possible, given the bid and ask prices of an option, to imply both the risk-neutral volatility and the liquidity level of the market at the same time. In particular, in the case of Black-Scholes and Bachelier specifications, this procedure results particularly efficient when the Wang transform is used, as the latter allows to analytically compute bid and ask option prices. In the case of the Bachelier model, these analytical formulae have been provided. The methodology outlined in this article relies on some intuitive and simple assumptions concerning the liquidity level of the market and the wideness of the range of option prices with respect to changes in the volatility parameter. A potential application for the technique we propose is that of constructing liquidity-free implied volatilitiy surfaces (and, consequently, corresponding implied liquidity surfaces at the same time). These liquidity-free implied volatility surfaces could be used as calibration inputs for different models under risk-neutral settings in a consistent manner.

\section*{Disclaimer}
The opinions expressed in this paper are solely those of the authors and do not necessarily reflect those of their current and past employers.

\clearpage

\bibliographystyle{apalike}
\bibliography{biblio}

\clearpage
\appendix

\section{Proof of Theorem \ref{th:general}\label{sec:proof}}
In order to prove Theorem \ref{th:general}, we provide two technical lemmas. The results presented here are based on \cite{michielon}, and they have been made available here to make the article self-contained. If we assume that the quoted bid and ask prices of the contingent claim $Y$ lie within the interval of all the possible risk-neutral prices that can be obtained by changing the parameter $\lambda$, and that for every $\lambda$ in a given interval it is always possible to find a distortion parameter $\gamma$ such that the observed bid-ask spread can be reproduced, Lemma \ref{lemma:lambda interval} follows.

\begin{lemma}\label{lemma:lambda interval}
Let $Y$ be a contingent claim whose price depends on a parameter $\lambda>0$ such that the risk-neutral price of $Y$ is strictly increasing with respect to $\lambda$. Assume that the inequalities $\inf_{\lambda> 0} \PV{Y(\lambda)} < b$ and $\sup_{\lambda> 0} \PV{Y(\lambda)} > a$ hold. Then, there exists an interval $[\lambda_b, \lambda_a]$ such that there is equivalence between $b\leq \PV{Y(\lambda)} \leq a$ and $\lambda \in [\lambda_b, \lambda_a]$. Further, assume that for every $\lambda \in [\lambda_b, \lambda_a]$ there exists $\gamma> 0$ such that $\ask{Y(\lambda),\gamma} - \bid{Y(\lambda),\gamma} = a-b$. Then, such $\gamma$ is unique.
\end{lemma}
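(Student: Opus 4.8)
The plan is to dispatch the two assertions separately: the interval claim uses only strict monotonicity and continuity of the pricing map, while the uniqueness of $\gamma$ rests on the monotone dependence of the conic spread on the distortion parameter.

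For the interval, I would set $f(\lambda) := \PV{Y(\lambda)}$ for $\lambda > 0$, strictly increasing by hypothesis and continuous in the settings of interest (the Black--Scholes and Bachelier prices are continuous in $\sigma$). First I would note that $S := \{\lambda > 0 : b \leq f(\lambda) \leq a\}$ is automatically convex: if $\lambda_1 < \mu < \lambda_2$ with $\lambda_1,\lambda_2 \in S$, monotonicity gives $b \leq f(\lambda_1) \leq f(\mu) \leq f(\lambda_2) \leq a$, so $\mu \in S$. The assumption $\inf_{\lambda>0} f(\lambda) < b$ furnishes some $\lambda_0$ with $f(\lambda_0) < b$, bounding $S$ away from $0$, and $\sup_{\lambda>0} f(\lambda) > a$ furnishes some $\lambda_1$ with $f(\lambda_1) > a$, bounding $S$ from above; hence $S$ is a bounded interval. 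Since $\inf f < b \leq a < \sup f$, the intermediate value theorem together with strict monotonicity yields unique points $\lambda_b, \lambda_a$ with $f(\lambda_b) = b$ and $f(\lambda_a) = a$, and $b \leq a$ forces $\lambda_b \leq \lambda_a$. Monotonicity then gives the equivalence $b \leq f(\lambda) \leq a \Longleftrightarrow \lambda \in [\lambda_b, \lambda_a]$, so that $S = [\lambda_b, \lambda_a]$.

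For the uniqueness of $\gamma$, fix $\lambda \in [\lambda_b, \lambda_a]$ and define the spread function
\[
g(\gamma) := \ask{Y(\lambda),\gamma} - \bid{Y(\lambda),\gamma}, \qquad \gamma > 0 .
\]
The hypothesis guarantees that $g(\gamma) = a - b$ admits at least one solution, so it suffices to prove that $g$ is injective, which I would obtain from strict monotonicity. Using \eqref{eq:bid} and \eqref{eq:ask} the spread reads
\[
g(\gamma) = e^{-rT}\left[ (\mathrm{C})\!\int_\Omega Y(\lambda)\, d\psi_\gamma(\mathbb{Q}) + (\mathrm{C})\!\int_\Omega \bigl(-Y(\lambda)\bigr)\, d\psi_\gamma(\mathbb{Q}) \right] ,
\]
and expanding each Choquet integral through \eqref{eq:choquetGeneral} rewrites $g(\gamma)$ as an integral in $t$ of the distorted tails $\psi_\gamma\bigl(\mathbb{Q}(Y(\lambda)\geq t)\bigr)$ and $\psi_\gamma\bigl(\mathbb{Q}(-Y(\lambda)\geq t)\bigr)$. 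Since $(\psi_\gamma)_{\gamma}$ is an increasing family of concave distortions, $\gamma_1 < \gamma_2$ implies $\psi_{\gamma_1}(u) \leq \psi_{\gamma_2}(u)$ for every $u \in (0,1)$; hence the ask term is nondecreasing and the bid term nonincreasing in $\gamma$, so $g$ is nondecreasing.

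The step requiring care, and the main obstacle, is upgrading this to \emph{strict} monotonicity. This is where I would invoke that the terminal payoff $Y(\lambda)$ has a non-degenerate risk-neutral law meeting a region on which the family is strictly ordered ($\psi_{\gamma_1} < \psi_{\gamma_2}$), so that the tail integrals separate strictly rather than merely weakly. In the Wang-transform case this becomes fully explicit: the Bachelier formulae derived above express $\bid{\cdot}$ and $\ask{\cdot}$ through $\mubar_\pm = \mubar \pm \gamma\sigmabar$, exhibiting $g(\gamma)$ as an explicit strictly increasing function of $\gamma$, which both confirms the monotonicity and indicates its generality. Granting strict monotonicity, $g$ is injective, so the assumed solution of $g(\gamma) = a - b$ is the only one; this establishes uniqueness of $\gamma$ and completes the lemma.
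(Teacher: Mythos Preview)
Your proposal is correct and follows essentially the same route as the paper. For the interval claim you simply spell out what the paper dispatches in one line (``immediately follows from $\PV{Y(\lambda)}$ being an increasing and continuous function of $\lambda$''), and for uniqueness you argue directly that the spread $g(\gamma)$ is strictly increasing and hence injective, whereas the paper phrases the identical point as a contradiction: assuming $\gamma_*<\gamma^*$ both work, it invokes that the ask is strictly increasing and the bid strictly decreasing in $\gamma$ to obtain $a-b<a-b$. The underlying ingredient---strict monotonicity of bid and ask in the distortion parameter---is the same; you are in fact more careful than the paper in flagging that strictness requires a non-degenerate payoff and a strictly ordered family of distortions, which the paper simply takes as given.
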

\begin{proof}
The existence of the interval $[\lambda_b, \lambda_a]$ immediately follows from $\PV{Y(\lambda)}$ being an increasing and continuous function of $\lambda$.

We now consider a fixed $\lambda \in [\lambda_b, \lambda_a]$. There exists at least one $\gamma>0$ such that $\ask{Y(\lambda),\gamma}-\bid{Y(\lambda),\gamma} = a-b$. To derive a contradiction, suppose that there exist $\gamma_*$ and $\gamma^*$ satisfying the relationships $\ask{Y(\lambda),\gamma_*}-\bid{Y(\lambda),\gamma_*}=\ask{\lambda,\gamma^*}-\bid{\lambda,\gamma^*}=a-b$, where $\gamma_*<\gamma^*$. We recall that the ask (bid) price of $Y(\lambda)$ is an increasing (decreasing) function of $\gamma$. From this it follows that $\ask{\lambda,\gamma_*}<\ask{\lambda,\gamma^*}$, and that $\bid{\lambda,\gamma_*}>\bid{\lambda,\gamma^*}$. Therefore, we obtain that $a-b=\ask{\lambda,\gamma_*}-\bid{\lambda,\gamma_*}<\ask{\lambda,\gamma^*}-\bid{\lambda,\gamma^*}=a-b$, which leads to a contradiction.
\end{proof}

We now prove the following continuity-related result.
	
\begin{lemma}\label{lemma:continuity gamma}
Under the hypotheses of Lemmas \ref{lemma:lambda interval}, for every $\lambda \in [\lambda_b, \lambda_a]$ the function such that $\lambda\mapsto \gamma(\lambda)$, where $\mathrm{ask}(Y(\lambda), \gamma(\lambda))-\mathrm{bid}(Y(\lambda), \gamma(\lambda))=a-b$, is continuous.
\end{lemma}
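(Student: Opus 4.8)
The plan is to realise $\gamma(\lambda)$ as the unique zero of a spread function that is strictly monotone in $\gamma$, and then to extract continuity in $\lambda$ by a trapping (intermediate-value) argument. Concretely, I would set $S(\lambda,\gamma)\coloneqq\ask{Y(\lambda),\gamma}-\bid{Y(\lambda),\gamma}$. By the hypotheses carried over from Lemma~\ref{lemma:lambda interval}, for each $\lambda\in[\lambda_b,\lambda_a]$ there is a unique $\gamma(\lambda)>0$ with $S(\lambda,\gamma(\lambda))=a-b$. The key structural fact, already used in the proof of Lemma~\ref{lemma:lambda interval}, is that $\gamma\mapsto\ask{Y(\lambda),\gamma}$ is strictly increasing and $\gamma\mapsto\bid{Y(\lambda),\gamma}$ is strictly decreasing; hence $\gamma\mapsto S(\lambda,\gamma)$ is strictly increasing, and $\gamma(\lambda)$ is the unique solution of $S(\lambda,\gamma)=a-b$. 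This is exactly the monotone-in-one-variable structure that makes an implicit-function-type continuity argument available without any differentiability.

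The core step I would carry out is the following $\varepsilon$--$\delta$ trapping. Fix $\lambda_0\in[\lambda_b,\lambda_a]$ and write $\gamma_0\coloneqq\gamma(\lambda_0)>0$. Given $\varepsilon\in(0,\gamma_0)$ (so that $\gamma_0-\varepsilon>0$), strict monotonicity of $S(\lambda_0,\cdot)$ yields the strict inequalities $S(\lambda_0,\gamma_0-\varepsilon)<a-b<S(\lambda_0,\gamma_0+\varepsilon)$. I would then invoke continuity of $\lambda\mapsto S(\lambda,\gamma)$ at the two fixed values $\gamma_0\pm\varepsilon$ to find $\delta>0$ such that, for all $\lambda\in[\lambda_b,\lambda_a]$ with $|\lambda-\lambda_0|<\delta$, both $S(\lambda,\gamma_0-\varepsilon)<a-b$ and $S(\lambda,\gamma_0+\varepsilon)>a-b$ persist. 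For such $\lambda$ we know $S(\lambda,\gamma(\lambda))=a-b$, and since $S(\lambda,\cdot)$ is strictly increasing the first inequality forces $\gamma(\lambda)>\gamma_0-\varepsilon$ while the second forces $\gamma(\lambda)<\gamma_0+\varepsilon$. Hence $|\gamma(\lambda)-\gamma(\lambda_0)|<\varepsilon$, which is continuity at $\lambda_0$; at the endpoints $\lambda_b,\lambda_a$ the identical argument goes through with a one-sided neighbourhood.

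The hard part, and really the only nontrivial input, is the continuity of $\lambda\mapsto S(\lambda,\gamma)$ for fixed $\gamma$, i.e.\ joint continuity of the bid and ask in the model parameter $\lambda$. In the general setting these are distorted (Choquet) expectations of the payoff under the $\lambda$-parametrised risk-neutral law, and I would establish continuity in $\lambda$ from continuity of the parametrised distribution function together with a dominated-convergence argument, exploiting that the distortion $\psi_\gamma$ is fixed. For the Black--Scholes and Bachelier options that motivate this work the point is immediate rather than delicate: the closed-form bid/ask expressions recalled in Section~\ref{sec:formulae} (for instance $\bidGamma{\callBSarg{\alpha}}=\callBSarg{\alpha+\frac{\gamma\sigma}{\sqrt{T}}}$ and its put/ask analogues) are smooth in $\sigma$, so $S(\cdot,\gamma)$ is continuous and the trapping step closes with no additional estimates.
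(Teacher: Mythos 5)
Your proof is correct, but it takes a genuinely different route from the paper's. The paper proves continuity sequentially: it fixes $\bar\lambda$ and a sequence $\lambda_n\to\bar\lambda$, and argues in three steps that $(\gamma(\lambda_n))_n$ is (i) bounded, (ii) has all subsequential limits equal, hence converges, and (iii) converges to $\gamma(\bar\lambda)$ by the uniqueness from Lemma~\ref{lemma:lambda interval}; each step is a contradiction argument exploiting that $\phi(\lambda,\gamma)=\ask{Y(\lambda),\gamma}-\bid{Y(\lambda),\gamma}$ is continuous and strictly increasing in $\gamma$. Your $\varepsilon$--$\delta$ trapping argument uses exactly the same two structural inputs (joint continuity of the spread and strict monotonicity in $\gamma$) but delivers the conclusion in one stroke: the persistence of the strict inequalities $S(\lambda,\gamma_0-\varepsilon)<a-b<S(\lambda,\gamma_0+\varepsilon)$ under small perturbations of $\lambda$ pins $\gamma(\lambda)$ into $(\gamma_0-\varepsilon,\gamma_0+\varepsilon)$ directly, so you get local boundedness for free and never need to pass to subsequences. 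This is arguably cleaner and slightly more economical; the paper's sequential version has the mild advantage of only invoking continuity of $\phi$ along the particular sequence and of making the role of uniqueness (step (iii)) explicit. You are also more careful than the paper on the one genuinely nontrivial input: the continuity of $\lambda\mapsto S(\lambda,\gamma)$ for fixed $\gamma$, which the paper simply asserts (``recall that $\phi$ is continuous in both arguments'') and which you correctly identify as requiring either a dominated-convergence argument for the Choquet integrals in general or the closed-form Wang-transform expressions in the Black--Scholes and Bachelier cases. No gap; your argument stands.
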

\begin{proof}
We consider $\bar{\lambda}$ in $[\lambda_b, \lambda_a]$ fixed, as well as a sequence $(\lambda_n)_n$ in $[\lambda_b, \lambda_a]$ converging to $\bar{\lambda}$. Further, we set $\phi(\lambda,\gamma)\coloneqq\ask{Y(\lambda),\gamma} - \bid{Y(\lambda),\gamma}$. We proceed in three steps.

(i) We first prove that the sequence $(\gamma(\lambda_n))_n$ is bounded. By contradiction, assume this is not the case. It then follows that there exists a subsequence $(\gamma(\lambda_{n_k}))_k$ of $(\gamma(\lambda_n))_n$ diverging to $+\infty$. The sequence $(\lambda_{n_k})_k$ converges to $\bar{\lambda}$, as it is a subsequence of a convergent sequence. Further, recall that $\phi$ is continuous in both arguments. Therefore, $\lim_k \phi(\bar{\lambda}_{n_k}, \gamma(\lambda_{n_k}))=\phi(\bar{\lambda}, +\infty)=a-b$, as $\phi(\bar{\lambda}_{n_k}, \gamma(\lambda_{n_k}))$  always equals $a-b$, by construction. As assumed in Lemma \ref{lemma:lambda interval} there exists $\bar{\gamma}> 0$ such that $\phi(\bar{\lambda}, \bar{\gamma})=a-b$. Due to $\phi$ being increasing in its second argument, we obtain that $a-b = \phi(\bar{\lambda}, \bar{\gamma})<\phi(\bar{\lambda}, +\infty)=a-b$, contradiction.

(ii) We now prove that the sequence $(\gamma(\lambda_n))_n$ admits limit. As $(\gamma(\lambda_n))_n$ is bounded, it has a convergent subsequence. Assume that there exist two subsequences, denoted as $(\gamma(\lambda_{n_k}))_k$ and $(\gamma(\lambda_{n_h}))_h$, that converge to $\gamma_*$ and $\gamma^*$, respectively, with $\gamma_*<\gamma^*$. The sequences $(\lambda_{n_k})_k$ and $(\lambda_{n_h})_h$ are subsequencies of the same convergent sequence. Therefore, they both converge to $\bar{\lambda}$. We then obtain that $a-b = \lim_k\phi(\lambda_{n_k},\gamma(\lambda_{n_k}))=\phi(\bar{\lambda},\gamma_*) <\phi(\bar{\lambda},\gamma^*) = \lim_h \phi(\lambda_{n_h},\gamma(\lambda_{n_h}))=a-b$, contradiction (the first and the last equalities are due to the definitions of $(\lambda_{n_k})_k$ and $(\lambda_{n_h})_h$, respectively, the second and the penultimate equalities follow from the continuity of $\phi$, while the inequality is a result of $\phi$ being increasing in its second argument). This means that every convergent subsequence of $(\gamma(\lambda_n))_n$ has the same limit. As $(\gamma(\lambda_n))_n$ is bounded, it admits a limit itself (recall that if all the convergent subsequences of a bounded sequence converge to the same real limit, then the sequence itself also converges to the same limit as well).

(iii) Lastly, we now prove that the limit of $(\gamma(\lambda_n))_n$ is $\gamma(\bar{\lambda})$. Let $\lim_n \gamma(\lambda_n)$ be denoted as $\bar{\gamma}$, and recall that $\phi$ is continuous in both arguments. The sequence $(\phi(\lambda_n, \gamma(\lambda_n)))_n$ is constant by construction, as it always equals $a-b$, and thus it converges to $a-b$. Its limit equals $\phi(\bar{\lambda},\bar{\gamma})$ due to the continuity of $\phi$ is continuous. As a consequence of Lemma \ref{lemma:lambda interval}, there exists a unique $\gamma(\bar{\lambda})$ such that $\phi(\bar{\lambda}, \gamma(\bar{\lambda}))=a-b$. From this it follows that $\bar{\gamma}=\gamma(\bar{\lambda})$.
\end{proof}

Now, as a consequence of the lemmas above, Theorem \ref{th:general} can be proven.

\begin{proof}[Proof of Theorem \ref{th:general}]
Consider the interval $[\lambda_b, \lambda_a]$ as per Lemma \ref{lemma:lambda interval}. It follows that there exists a unique $\gamma_b$ such that $\ask{\lambda_b,\gamma_b}-\bid{\lambda_b,\gamma_b}=a-b$. As $\bid{\lambda_b,\gamma_b}<\PV{\lambda_b}=b$, we obtain that $b<\ask{\lambda_b,\gamma_b}<a$. 
	
Analogously, consider $\lambda_a$. There exists a unique $\gamma_a$ such that $\ask{Y(\lambda_a),\gamma_a}-\bid{Y(\lambda_a),\gamma_a}=a-b$. Because $a=\PV{Y(\lambda_a)}<\ask{Y(\lambda_a),\gamma_a}$, it results that $b<\bid{Y(\lambda_a),\gamma_a}<a$.

The functions $\ask{Y(\lambda), \gamma}$, $\bid{Y(\lambda), \gamma}$ and $\gamma(\lambda)$ are continuous in $\lambda$ (the latter statement is a consequence of Lemma \ref{lemma:continuity gamma}). Therefore, there exists $\bar{\lambda}\in (\lambda_b, \lambda_a)$ and a corresponding $\bar{\gamma}$ such that $\ask{Y(\bar{\lambda}),\bar{\gamma}}=a$ and such that $\bid{Y(\bar{\lambda}),\bar{\gamma}}=b$. From Lemma \ref{lemma:lambda interval} the pair $(\bar{\lambda},\bar{\gamma})$ satisfying (\ref{eq:system}) is unique.
\end{proof}

\section{Derivation of the Bachelier option pricing formulae\label{sec:derivation}}
Equation \eqref{eq:b} describes the dynamics of an Ornstein-Uhlenbeck process. Thus, \eqref{eq:b} admits a solution which, at time $T$, is of the form
\begin{equation}\label{eq:solution}
X_T=e^{(r-\alpha) T} X_0 + \sigma\int_0^T e^{(r-\alpha)(T-t)}\, dW_t.
\end{equation}
From \eqref{eq:solution} it follows that, under $\mathbb{Q}$, $X_T$ is normally distributed and, further, its mean and variance can be calculated analytically. More precisely, we have that its mean equals
\begin{equation}\label{eq:mean}
\mubar \coloneqq e^{(r-\alpha) T} X_0,
\end{equation}
while its variance
\begin{equation}\label{eq:variance}
\sigmabar^2 \coloneqq \frac{\sigma^2(e^{2 (r-\alpha) T} -1)}{2(r-\alpha)}.
\end{equation}

The present value of an European call option written on $X$, with maturity $T$ and strike price $K$, is given by 
\begin{align}
\callBarg{\mubar} &= e^{-r T} \mathbb{E}^{\mathbb{Q}}\left( \left(X_T-K\right)^+ \right)\nonumber\\
&=e^{-r T} \int_K^{+\infty}\frac{x-K}{\sigmabar\sqrt{2\pi}}e^{-\frac{1}{2}\left(\frac{x-\mubar}{\sigmabar}\right)^2}\,dx\nonumber\\
&= e^{-r T}\left[ (\mubar-K) \Phi\left(\frac{\mubar-K}{\sigmabar} \right) +\sigmabar\phi\left(\frac{\mubar-K}{\sigmabar} \right)\right].\label{eq:ee}
\end{align}

One can then retrieve the value of the put option in a similar manner:
\begin{align}
\puttBarg{\mubar} &= e^{-r T} \mathbb{E}^{\mathbb{Q}}\left( \left(K-X_T\right)^+ \right)\nonumber\\
&=e^{-r T} \int_{-\infty}^{K}\frac{K-x}{\sigmabar\sqrt{2\pi}}e^{-\frac{1}{2}\left(\frac{x-\mubar}{\sigmabar}\right)^2}\,dx\nonumber\\
&=e^{-rT}\left[(K-\mubar)\Phi\left(\frac{K-\mubar}{\sigmabar}\right)+\sigmabar\phi\left(\frac{K-\mubar}{\sigmabar}\right)\right].\label{eq:tt}
\end{align}
Observe that \eqref{eq:ee} and \eqref{eq:tt} satisfy the generalized put-call parity relationship
\begin{equation}\nonumber
e^{-\alpha T}X_0 -e^{-rT}K + \mathcal{P}-\mathcal{C}=0,
\end{equation}
see \cite[Sec. 1.2.1]{optionFormulae}.

\section{A remark on a property of the Wang transform\label{sec:remark}}
As noted in \cite{wang}, transforming a (log)normal random variable via the Wang transform allows to still obtain a (log)normal random. However, we point out that this property still holds every time a normal random variable is transformed by means of a non-decreasing and left-continuous function. To show this, assume that a random variable $X$ is normally distributed with mean $\mu$ and variance $\sigma^2$ with respect to a probability measure $\mathbb{P}$. Further, let $f:\mathbb{R}\rightarrow \mathbb{R}$ be a non-decreasing function. Denote with $f^{-1}(\,\cdot\,)$ its inverse should $f(\,\cdot\,)$ be strictly increasing, or its pseudo-inverse otherwise. In the latter case this means, given $y\in\mathbb{R}$, that
\begin{equation}\label{eq:pseudo}
f^{-1}(y)\coloneqq\inf\left\{ x\in \mathbb{R}:f(x)> y \right\}.
\end{equation}
Let $Z$ be a standard normal random variable. We obtain that\footnote{We recall the following property arising from \eqref{eq:pseudo}: given $y$ and $z$ in $\mathbb{R}$, $f(z)\leq y$ if and only if $z\leq f^{-1}(y)$. First of all, observe that $f^{-1}(y)$ in \eqref{eq:pseudo} can be rewritten as $\sup\left\{ x\in \mathbb{R}:f(x)\leq y \right\}$. Therefore, if $f(z)\leq y$, then $z\in\left\{ x\in \mathbb{R}:f(x)\leq y \right\}$, from which $z\leq f^{-1}(y)$. On the other hand, if $z\leq f^{-1}(y)$, there exists an increasing sequence $(x_n)_n$ in $\left\{ x\in \mathbb{R}:f(x)\leq y \right\}$ such that $\lim_n x_n=f^{-1}(y)$. Therefore, for every $n$, $f(x_n)\leq y$. For the left-continuity of $f$ it follows that $f(f^{-1}(y))=\lim_n f(x_n)\leq y$. As $z\leq f^{-1}(y)$, given that $f$ is non-decreasing it results that $f(z)\leq y$.}
\begin{equation}
\mathbb{P}\left( f(X) \leq u\right) = \mathbb{P}\left( f(\mu + \sigma Z)\leq u\right)= \mathbb{P}\left( Z\leq \frac{f^{-1}(u)-\mu}{\sigma}\right)=\Phi\left( \frac{f^{-1}(u)-\mu}{\sigma}\right).\label{eq:1}
\end{equation}
Therefore, by applying the Wang transform with distortion parameter $\gamma$ to \eqref{eq:1}, which we denote with $\psi_\gamma(\,\cdot\,)$, it follows that
\begin{equation}
\psi_\gamma\left(\mathbb{P}\left( f(X) \leq u\right)\right) =\Phi\left( \frac{f^{-1}(u)-\mu +\gamma\sigma}{\sigma}\right).\label{eq:2}
\end{equation}
Thus, \eqref{eq:2} shows that the distribution of $X$ is invariant with respect to the Wang transform, up to upgrading its mean from $\mu$ to $\mu-\gamma\sigma$.

\end{doublespace}

\end{document}